%
\documentclass[runningheads,12pt]{llncs}
\usepackage[a4paper,margin=20mm]{geometry}
\textheight=650pt \textwidth=450pt \topmargin=14pt
\oddsidemargin=6pt\evensidemargin=6pt
\pagestyle{plain}
\usepackage{graphicx}

\usepackage[english]{babel}
\usepackage{amsfonts}
\usepackage{inputenc}
\usepackage[ruled]{algorithm}
\usepackage{color}
\usepackage{algpseudocode}
\usepackage{booktabs,multicol,multirow}
\usepackage{fancyhdr}
\usepackage{epic}
\usepackage{xspace}
\usepackage{amssymb,amsmath,verbatim}
\usepackage{multirow}
\usepackage{multicol}
\usepackage{adjustbox}
\usepackage{threeparttable}
\usepackage{float}

\usepackage{amsthm}
\interdisplaylinepenalty=2500
\usepackage{chngcntr}
\newtheorem{thm}{Theorem}

\newtheorem{lem}[thm]{Lemma}
\newtheorem{rem}[thm]{Remark}

\def\ZZ{\mathbb{Z}}
\def\QQ{\mathbb{Q}}
\def\RR{\mathbb{R}}

\def\cal{\mathcal}
\def\bf{\mathbf}

\def\TrapGen{\mathsf{TrapGen}}
\def\PK{\mathsf{PK}}
\def\SK{\mathsf{SK}}
\def\CT{\mathsf{CT}}
\def\SampleLeft{\mathsf{SampleLeft}}
\def\SampleRight{\mathsf{SampleRight}}
\def\SampleBasisLeft{\mathsf{SampleBasisLeft}}
\def\SampleP{\mathsf{SampleP}}
\def\SamplePre{\mathsf{SamplePre}}
\def\td{\mathsf{td}}

\def\Setup{\mathsf{Setup}}

\def\Enc{\mathsf{Enc}}
\def\Dec{\mathsf{Dec}}
\def\Td{\mathsf{Td}}
\def\Test{\mathsf{Test}}
\def\Pr{\mathrm{Pr}}
\def\Adv{\mathsf{Adv}}
\def\OW{\textsf{OW-CPA}}
\def\IND{\textsf{IND-CPA}}
\def\OWCCA{\textsf{OW-CCA2}}
\def\INDCCA{\textsf{IND-CCA2}}

\def\a{\alpha}
\def\u{\bf{u}}
\def\v{\bf{v}}
\def\t{\theta}
\def\e{\bf{e}}
\def\d{\bf{d}}

\def\x{\bf{x}}

\def\L{\Lambda}
\def\Lp{\Lambda^{\perp}}
\def\b{\bf{b}}
\def\s{\bf{s}}
\def\c{\bf{c}}

\def\ID{\mathsf{ID}}
\def\PP{\mathsf{PP}}

\def\RLWE{\textbf{Ring-LWE}_{n,q,D_{R,\sigma}}}
\def\RSIS{\textbf{Ring-SIS}_{q,n,m,\beta}}
%

\begin{document}
	\title{CCA2-secure Lattice-based Public Key Encryption with Equality Test in Standard Model}
	\titlerunning{}
	\authorrunning{}
	\author{Dung Hoang Duong\inst{1}\and Partha Sarathi Roy\inst{1}\and Willy Susilo\inst{1}\and\\  Kazuhide Fukushima\inst{2}\and Shinsaku Kiyomoto\inst{2}  \and Arnaud Sipasseuth\inst{2}}
	\institute{Institute of Cybersecurity and Cryptology,\\ School of Computing and Information Technology, University of Wollongong\\
		Northfields Avenue, Wollongong NSW 2522, Australia\\
		\email{\{hduong,partha,wsusilo\}@uow.edu.au} 
		\and
		Information Security Laboratory, KDDI Research, Inc.\\
		2-1-15 Ohara, Fujimino-shi, Saitama, 356-8502, Japan\\
		\email{\{ka-fukushima,kiyomoto,ar-sipasseuth\}@kddi-research.jp}
	}

	\maketitle              
	
	\begin{abstract}
		With the rapid growth of cloud storage and cloud computing services, many organisations and users choose to store the data on a cloud server for saving costs. However, due to security concerns, data of users would be encrypted before sending to the cloud. However, this hinders  a problem of computation on encrypted data in the cloud, especially in the case of performing data matching in various medical scenarios.
		Public key encryption with equality test (PKEET) is a powerful tool that allows the authorized cloud server to check whether two ciphertexts are generated by the same message. PKEET has then become a  promising candidate for many practical applications like efficient data management on encrypted databases.
		Lee et al. (Information Sciences 2020) proposed a generic construction of PKEET schemes in the standard model and hence it is possible to yield the first instantiation of post-quantum PKEET schemes based on lattices. At ACISP 2019, Duong et al. proposed a direct construction of PKEET over integer lattices in the standard model. However, their scheme does not reach the CCA2-security. 		
		In this paper, we propose an efficient CCA2-secure PKEET scheme  based on ideal lattices. In addition, we present a modification of the scheme by Duong et al. over integer lattices to attain the CCA2-security. Both schemes are proven secure in the standard model, and they enjoy the security in the upcoming quantum computer era.
	\end{abstract}
	%
	%
	%
\section{Introduction}
With the rapid growth of cloud computing, more and more organizations and individuals tend to store their data in cloud as well as outsouce their heavy computations to the cloud services. Since the data is normally sensitive, e.g., medical records of patients, it is desired to encrypt the data before sending or outsourcing to the cloud services. However, this causes a big problem for doing computations on encrypted data, especially in performing data matching in various medical scenarios. 	

Public key encryption with equality test (PKEET), introduced by Yang et al.~\cite{Yang10}, is a special kind of public key encryption that allows any authorised tester with a given trapdoor to test whether two ciphertexts are generated by the same message. This special feature has made PKEET a powerful tool utilized  in many practical applications, such as keyword search on encrypted data, encrypted data partitioning for efficient encrypted data management, personal health record systems, and spam filtering in encrypted email systems.
Since then, there have been an intensive research in this direction 
with the appearance of improvements and ones with additional functionalities~\cite{Tang11,Tang12,Tang12b,Ma15,Lee16}. However, those schemes are proven to be secure in the random oracle model, which is not a realistic, even though no insecurity has been found in practical schemes~\cite{CanettiGH98}. It is a desire to construct cryptographic schemes, e.g., PKEET, in the standard model.

Up to the present, there are only a few PKEET schemes in the standard model. Lee et al.~\cite{Lee2016} first proposed a generic construction of a PKEET scheme. Their method is to use a $2$-level hierarchical identity-based encryption (HIBE) scheme together with a one-time signature scheme. The HIBE scheme is used for generating an encryption scheme and for equality test, and the signature scheme is used for making the scheme CCA2-secure, based on the method of transforming an identity-based encryption (IBE) scheme to a CCA2-secure encryption scheme of Canetti et al~\cite{CHK04}. As a result, they obtain a CCA2-secure PKEET scheme given that the underlying HIBE scheme is IND-sID-CPA secure and the one-time signature scheme is strongly unforgeable. From their generic construction, it is possible to obtain a PKEET in standard model under many hard assumptions via instantiations.
In another recent paper, Zhang et al.~\cite{Zhang2019} proposed a direct construction of a CCA2-secure PKEET scheme based on pairings without employing strong cryptographic primitives such as HIBE schemes and strongly secure signatures as the generic construction of Lee et al.~\cite{Lee2016}. Their technique comes from a CCA2-secure public key encryption scheme by~\cite{Lai10} which was directly constructed by an idea from IBE. A comparison with an instantiation from Lee et al.~\cite{Lee2016} on pairings shows that their direct construction is much more efficient than the instantiated one.

All aforementioned existing schemes base their security on the hardness of some number-theoretic assumptions which are insecure against the quantum computer attacks~\cite{Shor97}. The generic construction by Lee et al.~\cite{Lee2016} is the first one with the possibility of yielding a post-quantum instantiation based on lattices, since lattice cryptography is the only  post-quantum cryptography area up to present offers  HIBE primitives, e.g., ~\cite{ABB10-EuroCrypt}. At ACISP 2019, Duong et al.~\cite{Duong19} proposed a direct PKEET in standard model based on lattices from IBE scheme by Agrawal et al.~\cite{ABB10-EuroCrypt}. However, their scheme is not CCA2-secure as claimed. 

\medskip
\noindent\textbf{Our contribution:}
In this paper, we propose an efficient PKEET scheme based on ideal lattices. The core construction is to utlize the IBE scheme by Agrawal et al.~\cite{ABB10-EuroCrypt} in the ideal lattice version proposed by Bert et al.~\cite{BFRS18}. In order to achieve the CCA2-security, we apply the generic CHK transformation by Caneti et al.~\cite{CHK04} in which we employ the efficient one-time strong signature scheme over ideal lattices by Lyubashevsky and Micciancio~\cite{LM18}. As a result, we obtain an efficient CCA2-secure PKEET scheme in standard model over ideal lattices. The security of the scheme is reduced to the hardness of the learning with errors (LWE) problem and the short integer solution (SIS) problem over rings; see Section~\ref{sec:construction-ideal} for the detail.

We next revisit and modify the PKEET construction over integer lattices by Duong et al.~\cite{Duong19} to achieve CCA2-security by correctly applying the CHK transformation. In order to reserve he construction as in~\cite{Duong19}, we utilize the strong signature scheme in~\cite{ABB10-EuroCrypt}, in which we can use the public key as a verification key. As a trade-off, the ciphertext needs to add one more matrix in $\mathbb{Z}_q^{n\times m}$ for verificaiton and a signature in $\mathbb{Z}_q^{2m}$. This results to a CCA2-secure PKEET scheme, which is still 
more efficient than the generic construction of Lee et al.~\cite{Lee2016}; see Section~\ref{sec:Duong-revisit} for the detail. We also present in the Appendix the instantiation of the construction by Lee et al.~\cite{Lee2016}. As a result, our scheme in Section~\ref{sec:Duong-revisit} has much smaller ciphertext size compared to that of Lee et al.	
Note that in both construction, one just needs to generate a one-time signature in the encryption process which in turn reduces the ciphertext size. We also note that the PKEET version over integer lattices of our proposed scheme in Section~\ref{sec:construction-ideal}, by utilizing the IBE scheme in~{\cite{ABB10-EuroCrypt}} and the one-time signature scheme from SIS in~\cite{LM18}, is still more efficient than the revisited scheme in Section~\ref{sec:Duong-revisit} and the instantiation of the construction by Lee et al.~\cite{Lee2016} in the Appendix.



\section{Preliminaries}
\subsection{Public key encryption with equality test (PKEET)}\label{sec:security model}
In this section, we will recall the model of PKEET and its security model.

We remark that a PKEET system is a multi-user setting. Hence we assume that in our system throughout the paper, each user is assigned with an index $i$ with $1\leq i\leq N$ where $N$ is the number of users in the system.

\begin{definition}[PKEET] \label{def:PKEET}
	Public key encryption with equality test (PKEET) consists of the following polynomial-time algorithms:
	\begin{itemize}
		\item $\Setup(\lambda)$: On input a security parameter $\lambda$ and set of parameters, it outputs the a pair of a user's public key $\PK$ and secret key $\SK$.
		\item $\Enc(\PK,\bf{m})$: On input the public key $\PK$ and a message $\bf{m}$, it outputs a ciphertext $\CT$.
		\item $\Dec(\SK,\CT)$: On input the secret key $\SK$ and a ciphertext $\CT$, it outputs a message $\bf{m}'$ or $\perp$.
		\item $\Td(\SK_i)$: On input the secret key $\SK_i$ for the user $U_i$, it outputs a trapdoor $\td_i$.
		\item $\Test(\td_i,\td_j,\CT_i,\CT_j)$: On input two trapdoors $\td_i, \td_j$ and two ciphertexts $\CT_i, \CT_j$ for users $U_i$ and $U_j$ respectively, it outputs $1$ or $0$.
	\end{itemize}    
\end{definition}

\noindent\textbf{Correctness.} We say that a PKEET scheme is \textit{correct} if the following three condition hold:
\begin{description}
	\item[(1)] For any security parameter $\lambda$, any user $U_i$ and any message $\bf{m}$, it holds that
	$$\small{\Pr\left[ {\begin{gathered}
				\Dec(\SK_i,\CT_i)=\bf{m}\end{gathered}  
			\left| \begin{gathered}
				(\PK_i,\SK_i)\gets\Setup(\lambda)\\
				\CT_i\gets\Enc(\PK_i,\bf{m})
			\end{gathered}  \right.} \right]=1}.$$
	\item[(2)] For any security parameter $\lambda$, any users $U_i$, $U_j$ and any messages $\bf{m}_i, \bf{m}_j$, it holds that:    
	$$\small{\Pr\left[{
			\Test\left( \begin{gathered}
				\td_i \\
				\td_j \\
				\CT_i \\
				\CT_j \\ 
			\end{gathered}  \right) = 1\left| \begin{array}{l}
				(\PK_i,\SK_i)\gets\Setup(\lambda) \\
				\CT_i\gets\Enc(\PK_i,\bf{m}_i) \\
				\td_i\gets\Td(\SK_i) \\
				(\PK_j,\SK_j)\gets\Setup(\lambda) \\
				\CT_j\gets\Enc(\PK_j,\bf{m}_j) \\
				\td_j\gets\Td(\SK_j) 
			\end{array}  \right.} \right]=1}$$
	if $\bf{m}_i=\bf{m}_j$ regardless of whether $i=j$.
	
	\item[(3)] For any security parameter $\lambda$, any users $U_i$, $U_j$ and any messages $\bf{m}_i, \bf{m}_j$, it holds that
	$$\small{\Pr\left[{
			\Test\left( \begin{gathered}
				\td_i \\
				\td_j \\
				\CT_i \\
				\CT_j \\ 
			\end{gathered}  \right) = 1\left| \begin{array}{l}
				(\PK_i,\SK_i)\gets\Setup(\lambda) \\
				\CT_i\gets\Enc(\PK_i,\bf{m}_i) \\
				\td_i\gets\Td(\SK_i) \\
				(\PK_j,\SK_j)\gets\Setup(\lambda) \\
				\CT_j\gets\Enc(\PK_j,\bf{m}_j) \\
				\td_j\gets\Td(\SK_j) 
			\end{array}  \right.} \right]}$$
	is negligible in $\lambda$ for any ciphertexts $\CT_i$, $\CT_j$ such that $\Dec(\SK_i,\CT_i)\ne\Dec(\SK_j,\CT_j)$ regardless of whether $i=j$.
\end{description}

\noindent\textbf{Security model of PKEET.} For the security model of PKEET, we consider two types of adversaries:
\begin{itemize}
	\item[$\bullet$] Type-I adversary: for this type, the adversary can request to issue a trapdoor for the target user and thus can perform equality tests on the challenge ciphertext. The aim of this type of adversaries is to reveal the message in the challenge ciphertext.
	\item[$\bullet$] Type-II adversary: for this type, the adversary cannot request to issue a trapdoor for the target user and thus cannot perform equality tests on the challenge ciphertext. The aim of this type of adversaries is to distinguish which message is in the challenge ciphertext between two candidates.
\end{itemize}
The security model of a PKEET scheme against two types of adversaries above is described in the following.\\

\noindent\textbf{OW-CCA2 security against Type-I adversaries.} We illustrate the game between a challenger $\cal{C}$ and a Type-I adversary $\cal{A}$ who can have a trapdoor for all ciphertexts of the target user, say $U_{\theta}$, that he wants to attack, as follows:
\begin{enumerate}
	\item \textbf{Setup:} The challenger $\cal{C}$ runs $\Setup(\lambda)$ to generate the key pairs $(\PK_i,\SK_i)$ for all users with $i=1,\cdots,N$, and gives $\{\PK_i\}_{i=1}^N$ to $\cal{A}$.
	\item \textbf{Phase 1:}  The adversary $\cal{A}$ may make queries polynomially many times adaptively and in any order to the following oracles:
	\begin{itemize}
		\item $\cal{O}^{\SK}$: an oracle that on input an index $i$ (different from $\theta$), returns the $U_i$'s secret key $\SK_i$.
		\item $\cal{O}^\Dec$: an oracle that on input a pair of an index $i$ and a ciphertext $\CT_i$, returns the output of $\Dec(\SK_i,\CT_i)$ using the secret key of the user $U_i$.
		\item $\cal{O}^\Td$: an oracle that on input an index $i$, return $\td_i$ by running $\td_i\gets\Td(\SK_i)$ using the secret key $\SK_i$ of the user $U_i$.
	\end{itemize}
	\item \textbf{Challenge:} $\cal{C}$ chooses a random message $\bf{m}$ in the message space and run $\CT_{\theta}^*\gets\Enc(\PK_{\theta},\bf{m})$, and sends $\CT_{\theta}^*$ to $\cal{A}$.
	\item \textbf{Phase 2:} $\cal{A}$ can query as in Phase $1$ with the following constraints:
	\begin{itemize}
		\item The index $\theta$ cannot be queried to the key generation oracle $\cal{O}^{\SK}$;
		\item The pair of the index $\theta$ and the ciphertext $\CT_{\theta}^*$ cannot be queried to the decryption oracle $\cal{O}^\Dec$.
	\end{itemize}
	\item \textbf{Guess:} $\cal{A}$ output $\bf{m}'$.
\end{enumerate}
The adversary $\cal{A}$ wins the above game if $\bf{m}=\bf{m}'$ and the success probability of $\cal{A}$ is defined as
$$\Adv^{\OW}_{\cal{A},\text{PKEET}}(\lambda):=\Pr[\bf{m}=\bf{m}'].$$
\begin{rem}
	If the message space is polynomial in the security parameter or the min-entropy of the message distribution is much lower than the security parameter then a Type-I adversary $\cal{A}$ with a trapdoor for the challenge ciphertext can reveal the message in polynomial-time or small exponential time in the security parameter, by performing the equality tests with the challenge ciphertext and all other ciphertexts of all messages generated by himself. Hence to prevent this attack, we assume that the size of the message space $\cal{M}$ is exponential in the security parameter and the min-entropy of the message distribution is sufficiently higher than the security parameter.
\end{rem}

\noindent\textbf{IND-CCA2 security against Type-II adversaries.} We present the game between a challenger $\cal{C}$ and a Type-II adversary $\cal{A}$ who cannot have a trapdoor for all ciphertexts of the target user $U_{\theta}$ as follows:
\begin{enumerate}
	\item \textbf{Setup:} The challenger $\cal{C}$ runs $\Setup(\lambda)$ to generate the key pairs $(\PK_i,\SK_i)$ for all users with $i=1,\cdots,N$, and gives $\{\PK_i\}_{i=1}^N$ to $\cal{A}$.
	\item \textbf{Phase 1:}  The adversary $\cal{A}$ may make queries polynomially many times adaptively and in any order to the following oracles:
	\begin{itemize}
		\item $\cal{O}^{\SK}$: an oracle that on input an index $i$ (different from $t$), returns the $U_i$'s secret key $\SK_i$.
		\item $\cal{O}^\Dec$: an oracle that on input a pair of an index $i$ and a ciphertext $\CT_i$, returns the output of $\Dec(\SK_i,\CT_i)$ using the secret key of the user $U_i$.
		\item $\cal{O}^\Td$: an oracle that on input an index $i$ (different from $t$), return $\td_i$ by running $\td_i\gets\Td(\SK_i)$ using the secret key $\SK_i$ of the user $U_i$.
	\end{itemize}
	\item \textbf{Challenge:} $\cal{A}$ chooses two messages $\bf{m}_0$ $\bf{m}_1$ of same length and pass to $\cal{C}$, who then selects a random bit $b\in\{0,1\}$, runs $\CT^*_{\theta, b}\gets\Enc(\PK_{\theta},\bf{m}_b)$ and sends $\CT^*_{\theta,b}$ to $\cal{A}$.
	\item \textbf{Phase 2:} $\cal{A}$ can query as in Phase $1$ with the following constraints:
	\begin{itemize}
		\item The index $t$ cannot be queried to the key generation oracle $\cal{O}^{\SK}$ and the trapdoor generation oracle $\cal{O}^\Td$;
		\item The pair of the index $\theta$ and the ciphertext $\CT_{\theta,b}^*$ cannot be queried to the decryption oracle $\cal{O}^\Dec$.
	\end{itemize}
	\item \textbf{Guess:} $\cal{A}$ output $b'$.
\end{enumerate}
The adversary $\cal{A}$ wins the above game if $b=b'$ and the advantage of $\cal{A}$ is defined as
$$\Adv_{\cal{A},\text{PKEET}}^{\IND}:=\left|\Pr[b=b']-\frac{1}{2}\right|.$$

\subsection{Lattices}
Throughout the paper, we will mainly focus on integer lattices, which are discrete subgroups of $\ZZ^m$. Specially, a lattice $\Lambda$ in $\ZZ^m$ with basis $B=[\b_1,\cdots,\b_n]\in\ZZ^{m\times n}$, where each $\b_i$ is written in column form, is defined as
$$\Lambda:=\left\{\sum_{i=1}^n\b_ix_i | x_i\in\ZZ~\forall i=1,\cdots,n \right\}\subseteq\ZZ^m.$$
We call $n$ the rank of $\L$ and if $n=m$ we say that $\L$ is a full rank lattice. In this paper, we mainly consider full rank lattices containing $q\ZZ^m$, called $q$-ary lattices, defined as the following, for a given matrix $A\in\ZZ^{n\times m}$ and $\bf{u}\in\ZZ_q^n$
\begin{align*}
	\L_q(A) &:= \left\{ \e\in\ZZ^m ~\rm{s.t.}~ \exists \bf{s}\in\ZZ_q^n~\rm{where}~A^T\bf{s}=\bf{e}\mod q \right\}\\
	\Lp_q(A) &:= \left\{ \e\in\ZZ^m~\rm{s.t.}~A\e=0\mod q \right\} \\
	\L_q^{\bf{u}}(A) &:=  \left\{ \e\in\ZZ^m~\rm{s.t.}~A\e=\bf{u}\mod q \right\}
\end{align*}
Note that if $\bf{t}\in\L_q^{\bf{u}}(A)$ then $\L_q^{\bf{u}}(A)=\Lp_q(A)+\bf{t}$.

Let $S=\{\s_1,\cdots,\s_k\}$ be a set of vectors in $\mathbb{R}^m$. We denote by $\|S\|:=\max_i\|\s_i\|$ for $i=1,\cdots,k$, the maximum $l_2$ length of the vectors in $S$. We also denote $\tilde{S}:=\{\tilde{\s}_1,\cdots,\tilde{\s}_k \}$ the Gram-Schmidt orthogonalization of the vectors $\s_1,\cdots,\s_k$ in that order. We refer to $\|\tilde{S}\|$ the Gram-Schmidt norm of $S$.

Ajtai~\cite{Ajtai99} first proposed how to sample a uniform matrix $A\in\ZZ_q^{n\times m}$ with an associated basis $S_A$ of $\Lp_q(A)$ with low Gram-Schmidt norm. It is improved later by Alwen and Peikert~\cite{AP09} in the following Theorem.

\begin{thm}\label{thm:TrapGen}
	Let $q\geq 3$ be odd and $m:=\lceil 6n\log q\rceil$. There is a probabilistic polynomial-time algorithm $\TrapGen(q,n)$ that outputs a pair $(A\in\ZZ_q^{n\times m},S\in\ZZ^{m\times m})$ such that $A$ is statistically close to a uniform matrix in $\ZZ_q^{n\times m}$ and $S$ is a basis for $\Lp_q(A)$ satisfying
	\[\|\tilde{S}\|\leq O(\sqrt{n\log q})\quad\text{and}\quad\|S\|\leq O(n\log q)\]
	with all but negligible probability in $n$.
\end{thm}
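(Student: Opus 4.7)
The plan is to adopt the block-construction paradigm of Alwen and Peikert, which builds $A$ by concatenating a uniform portion with a carefully designed portion that embeds a ``gadget'' admitting an explicit short basis. First, I would sample $A_1 \in \ZZ_q^{n \times m_1}$ uniformly at random with $m_1 \approx n\lceil \log q\rceil$ columns, and check that its columns span $\ZZ_q^n$ (which holds with overwhelming probability by a standard rank argument). Then sample a short random matrix $R \in \ZZ^{m_1 \times m_2}$ with entries drawn from a subgaussian distribution of bounded width (e.g., uniform over $\{-1,0,1\}$ or a discrete Gaussian of constant parameter), and set $A_2 := G - A_1 R \bmod q$, where $G \in \ZZ_q^{n \times m_2}$ is a fixed primitive gadget matrix (for instance, the powers-of-two matrix) for which an explicit basis $S_G$ of $\Lp_q(G)$ with very small Gram-Schmidt norm is known. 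The algorithm would finally output $A := [A_1 \mid A_2]$ together with the basis $S$ of $\Lp_q(A)$ constructed below.

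For the statistical uniformity of the output, I would invoke the leftover hash lemma (equivalently, a regularity lemma for subgaussian matrix products): because $m_1$ is taken sufficiently larger than $n\log q$, the product $A_1 R \bmod q$ is statistically close to uniform on $\ZZ_q^{n \times m_2}$ even conditioned on $A_1$. Hence $A_2 = G - A_1 R$ is close to uniform, and therefore $A = [A_1 \mid A_2]$ is statistically close to uniform on $\ZZ_q^{n \times m}$ with $m = m_1 + m_2 = \lceil 6n\log q\rceil$ for appropriate parameter choices.

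For the short basis, the key identity is
\[
A \cdot \begin{pmatrix} R \\ I_{m_2} \end{pmatrix} = G \pmod{q},
\]
which yields an injection $\Lp_q(G) \hookrightarrow \Lp_q(A)$ sending $\v \mapsto (R\v,\, \v)^T$. Combining the images (under this injection) of the known short basis $S_G$ with a suitable complementary set of vectors handling the kernel of the projection $\Lp_q(A)\to \Lp_q(G)$ produces a full basis $S$ of $\Lp_q(A)$. Since the lifting map is block-lower-triangular with identity on the bottom block, a direct Gram-Schmidt computation shows that $\|\tilde{S}\|$ is controlled by $\|\tilde{S}_G\|$ together with a factor depending on the singular values of $R$.

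The principal obstacle is the tight control of the Gram-Schmidt norm through the lifting step: the off-diagonal contribution $R\v$ can significantly enlarge the lifted vectors unless one both chooses the basis carefully (so that the Gram-Schmidt process essentially cancels the $R\v$ parts against the identity block) and applies subgaussian concentration to bound $\|R\| = O(\sqrt{m_1})$ with overwhelming probability. A secondary subtlety is to verify that the augmented lifted basis actually generates the whole lattice $\Lp_q(A)$ rather than a proper sublattice; this is established using the primitivity of $G$, which guarantees that every element of $\Lp_q(A)$ decomposes as a lifted vector from $\Lp_q(G)$ plus a correction supported on the first block. Once these two points are in hand, direct estimation delivers both claimed bounds $\|\tilde{S}\|\leq O(\sqrt{n\log q})$ and $\|S\|\leq O(n\log q)$.
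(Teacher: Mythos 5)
The paper does not actually prove this statement: Theorem~\ref{thm:TrapGen} is imported as a black box from Alwen--Peikert \cite{AP09} (improving Ajtai \cite{Ajtai99}), so there is no in-paper argument to measure yours against. On its own terms your sketch is sound, but note that it follows the later Micciancio--Peikert ``gadget trapdoor'' formulation ($A=[A_1\mid G-A_1R]$ with $R$ subgaussian and $S_G$ an explicit short basis of $\Lp_q(G)$) rather than the original Alwen--Peikert block construction that the theorem is cited from; both live in the same concatenation paradigm and both deliver the stated bounds, and your route is the cleaner, now-standard one. The parameters check out: the leftover hash lemma needs $m_1\geq n\log q+\omega(\log n)$ and the gadget needs $m_2=n\lceil\log q\rceil$, which fits comfortably inside $m=\lceil 6n\log q\rceil$; $s_1(R)=O(\sqrt{m_1}+\sqrt{m_2})=O(\sqrt{n\log q})$ with overwhelming probability; and $\|\tilde{S}_G\|\leq\sqrt{5}$. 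The one place you stay schematic --- the ``complementary set of vectors'' and the claimed cancellation in Gram--Schmidt --- can be made concrete as follows: pick $W\in\ZZ^{m_2\times m_1}$ with $GW=-A_1\bmod q$ and $0/1$ entries (binary decomposition, using primitivity of $G$), and take
\[
S=\begin{pmatrix} I & R\\ 0 & I\end{pmatrix}\begin{pmatrix} 0 & I\\ S_G & W\end{pmatrix},
\]
whose columns one checks lie in $\Lp_q(A)$ and generate it (an index computation again using primitivity of $G$). With the gadget-lifted columns ordered first, orthogonalizing the remaining columns against the span $\{(Rv,v)\}$ kills their $R$-dependent parts, giving $\|\tilde{S}\|\leq(1+s_1(R))\|\tilde{S}_G\|=O(\sqrt{n\log q})$ and $\|S\|\leq O(n\log q)$; this is exactly the cancellation you gesture at, and the column ordering is where it is realized.
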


\begin{definition}[Gaussian distribution]
	Let $\L\subseteq\ZZ^m$ be a lattice. For a vector $\bf{c}\in\RR^m$ and a positive parameter $\sigma\in\RR$, define:
	$$\rho_{\sigma,\c}(\x)=\exp\left(\pi\frac{\|\x-\c\|^2}{\sigma^2}\right)\quad\text{and}\quad
	\rho_{\sigma,\c}(\L)=\sum_{\x\in\L}\rho_{\sigma,\c}(\x).    $$
	The discrete Gaussian distribution over $\L$ with center $\c$ and parameter $\sigma$ is
	$$\forall \bf{y}\in\L\quad,\quad\cal{D}_{\L,\sigma,\c}(\bf{y})=\frac{\rho_{\sigma,\c}(\bf{y})}{\rho_{\sigma,\c}(\L)}.$$
\end{definition}
For convenience, we will denote by $\rho_\sigma$ and $\cal{D}_{\L.\sigma}$ for $\rho_{\bf{0},\sigma}$ and $\cal{D}_{\L,\sigma,\bf{0}}$ respectively. When $\sigma=1$ we will write $\rho$ instead of $\rho_1$. We recall below in Theorem~\ref{thm:Gauss} some useful results. The first one is from~\cite{CHKP10} and formulated in ~{\cite[Theorem 17]{ABB10-EuroCrypt}}. The second one from~{\cite[Theorem 19]{ABB10-EuroCrypt}}   and the last one is from~{\cite[Corollary 30]{ABB10-EuroCrypt}}.

\begin{thm}\label{thm:Gauss}
	Let $q> 2$ and let $A, B$ be a matrix in $\ZZ_q^{n\times m}$ with $m>n$ and $B$ is rank $n$. Let $T_A, T_B$ be a basis for $\Lp_q(A)$ and  $\Lp_q(B)$ respectively.
	Then for $c\in\RR^m$ and $U\in\ZZ_q^{n\times t}$:
	\begin{enumerate}
		\item Let $M$ be a matrix in $\ZZ_q^{n\times m_1}$ and $\sigma\geq\|\widetilde{T_A}\|\omega(\sqrt{\log(m+m_1)})$. Then there exists a PPT algorithm $\SampleLeft(A,M,T_A,U,\sigma)$ that outputs a matrix $\e\in\ZZ^{(m+m_1)\times t}$ distributed statistically close to $\cal{D}_{\L_q^{U}(F_1),\sigma}$ where $F_1:=(A~|~M)$. In particular $\e\in \L_q^{U}(F_1)$, i.e., $F_1\cdot\e=U\mod q$.
		\item Let $M$ be a matrix in $\ZZ_q^{n\times m_1}$ and $\sigma\geq\|\widetilde{T_A}\|\omega(\sqrt{\log(m+m_1)})$. Then there exists a PPT algorithm $\SampleBasisLeft(A,M,T_A,\sigma)$ that outputs a basis $T_{F_1}$ for $\Lp_q(F_1)$ where $F_1:=(A~|~M)$, provided that $A$ is of rank $n$.
		
		\item Let $R$ be a matrix in $\ZZ^{k\times m}$ and let $s_R:=\sup_{\|\x\|=1}\|R\x\|$. Let $F_2:=(A~|~AR+B)$. Then for  $\sigma\geq\|\widetilde{T_B}\|s_R\omega(\sqrt{\log m})$, there exists a PPT algorithm \\$\SampleRight(A,B,R,T_B,U,\sigma)$ that outputs a matrix $\e\in\ZZ^{(m+k)\times t}$ distributed statistically close to $\cal{D}_{\L_q^{U}(F_2),\sigma}$. In particular $\e\in \L_q^{\u}(F_2)$, i.e., $F_2\cdot\e=U\mod q$. 
		
		Note that when $R$ is a random matrix in $\{-1,1\}^{m\times m}$ then $s_R<O(\sqrt{m})$ with overwhelming probability (cf.~{\cite[Lemma 15]{ABB10-EuroCrypt}}).
		
	\end{enumerate}
\end{thm}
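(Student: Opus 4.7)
The plan is to recognize all three statements as instances of a uniform two-step recipe: (i) use the given trapdoor to construct a short basis of the relevant $q$-ary lattice, and (ii) run the Klein/Gentry--Peikert--Vaikuntanathan (GPV) preimage sampler on that basis to obtain vectors distributed statistically close to the target discrete Gaussian $\cal{D}_{\L_q^U(\cdot),\sigma}$. Step~(ii) is standard: once one has a basis whose Gram--Schmidt norm is at most $\sigma/\omega(\sqrt{\log N})$, GPV's statistical-closeness guarantee applies. Everything therefore reduces to constructing, and bounding the Gram--Schmidt norm of, an appropriate basis.

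For parts~1 and~2, write $F_1=(A~|~M)$. Since $A$ has rank $n$, one can solve $AW\equiv -M\pmod q$ for some $W\in\ZZ^{m\times m_1}$, and a direct check shows that
\[
B_{F_1}=\begin{pmatrix} T_A & W \\ 0 & I_{m_1}\end{pmatrix}
\]
spans $\Lp_q(F_1)$. The identity block is orthogonal to the upper block under Gram--Schmidt, so $\|\widetilde{B_{F_1}}\|=\max(\|\widetilde{T_A}\|,1)$. Part~2 then follows by randomizing this basis via the ToBasis-style routine of~\cite{CHKP10} to sample a fresh basis with essentially the same Gram--Schmidt bound, and part~1 by running the GPV preimage sampler on it against the syndrome $U$. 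In both cases the stated lower bound on $\sigma$ provides the $\omega(\sqrt{\log(m+m_1)})$ slack required by GPV.

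Part~3 is the most delicate because the trapdoor corresponds to $B$ rather than to $F_2=(A~|~AR+B)$. The key observation is that the unimodular matrix
\[
H=\begin{pmatrix} I_m & R \\ 0 & I_k \end{pmatrix}
\]
satisfies $(A~|~B)\,H=F_2$, whence $\Lp_q(F_2)=H^{-1}\,\Lp_q(A~|~B)$. Applying the construction of part~2 to $(A~|~B)$ produces a short basis of $\Lp_q(A~|~B)$ whose Gram--Schmidt norm is essentially $\|\widetilde{T_B}\|$, and multiplying it on the left by $H^{-1}$ transports it to a basis of $\Lp_q(F_2)$. A careful Gram--Schmidt tracking---standard in the ABB framework---shows that the resulting norm grows by at most a factor $s_R$, which is exactly the slack present in the hypothesis $\sigma\geq\|\widetilde{T_B}\|\,s_R\,\omega(\sqrt{\log m})$; running GPV on this delegated basis then yields \SampleRight.

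The main obstacle is this Gram--Schmidt tracking in part~3: one must verify that the shear by $R$ inflates $\|\widetilde{\cdot}\|$ by no more than the multiplicative factor $s_R$, and that afterwards $\sigma$ still dominates the new norm by the $\omega(\sqrt{\log m})$ margin required for GPV's analysis. Everything else---the algebraic checks that the candidate matrices really are bases, the existence of the preimage $W$, and the statistical closeness of the GPV output---is mechanical.
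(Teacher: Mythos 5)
This theorem is stated in the paper without proof: it is imported verbatim from the literature (the text attributes part~1 to \cite{CHKP10} as formulated in Theorem~17 of \cite{ABB10-EuroCrypt}, part~2 to Theorem~19 of \cite{ABB10-EuroCrypt}, and part~3 to Corollary~30 of \cite{ABB10-EuroCrypt}), so there is no in-paper argument to compare against. Your sketch is a correct reconstruction of the standard proofs in those references: the extended basis $\bigl(\begin{smallmatrix} T_A & W \\ 0 & I \end{smallmatrix}\bigr)$ with $AW\equiv -M$ is exactly CHKP10's ExtBasis (with the Gram--Schmidt norm preserved as you compute), and for part~3 the shear $(A\,|\,B)H=F_2$ with the bound $\|\widetilde{H^{-1}S}\|\leq s_1(H^{-1})\|\widetilde{S}\|\leq (1+s_R)\|\widetilde{S}\|$ recovers ABB10's bound, which they obtain by the equivalent device of exhibiting the short vectors $(-Rb_i;b_i)\in\Lp_q(F_2)$ directly from the columns $b_i$ of $T_B$. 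The only cosmetic difference is that ABB10's actual $\SampleLeft$ samples the second block $\e_2$ first and then uses $T_A$ to sample $\e_1$ from the coset $\L_q^{U-M\e_2}(A)$, rather than running GPV on the extended basis; both yield the same distribution up to negligible statistical distance.
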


The security of our construction reduces to the LWE (Learning With Errors) problem introduced by Regev~\cite{Regev05}.
\begin{definition}[LWE problem]
	Consider publicly a prime $q$, a positive integer $n$, and a distribution $\chi$ over $\ZZ_q$. An $(\ZZ_q,n,\chi)$-LWE problem instance consists of access to an unspecified challenge oracle $\cal{O}$, being either a noisy pseudorandom sampler $\cal{O}_\s$ associated with a secret $\s\in\ZZ_q^n$, or a truly random sampler $\cal{O}_\$$ who behaviors are as follows:
	\begin{description}
		\item[$\cal{O}_\s$:] samples of the form $(\u_i,v_i)=(\u_i,\u_i^T\s+x_i)\in\ZZ_q^n\times\ZZ_q$ where $\s\in\ZZ_q^n$ is a uniform secret key, $\u_i\in\ZZ_q^n$ is uniform and $x_i\in\ZZ_q$ is a noise withdrawn from $\chi$.
		\item[$\cal{O}_\$$:] samples are uniform pairs in $\ZZ_q^n\times\ZZ_q$.
	\end{description}\
	The $(\ZZ_q,n,\chi)$-LWE problem allows responds queries to the challenge oracle $\cal{O}$. We say that an algorithm $\cal{A}$ decides the $(\ZZ_q,n,\chi)$-LWE problem if 
	$$\Adv_{\cal{A}}^{\mathsf{LWE}}:=\left|\Pr[\cal{A}^{\cal{O}_\s}=1] - \Pr[\cal{A}^{\cal{O}_\$}=1] \right|$$    
	is non-negligible for a random $\s\in\ZZ_q^n$.
\end{definition}
Regev~\cite{Regev05} showed that (see Theorem~\ref{thm:LWE} below) when $\chi$ is the distribution $\overline{\Psi}_\alpha$ of the random variable $\lfloor qX\rceil\mod q$ where $\alpha\in(0,1)$ and $X$ is a normal random variable with mean $0$ and standard deviation $\alpha/\sqrt{2\pi}$ then the LWE problem is hard. 
\begin{thm}\label{thm:LWE}
	If there exists an efficient, possibly quantum, algorithm for deciding the $(\ZZ_q,n,\overline{\Psi}_\alpha)$-LWE problem for $q>2\sqrt{n}/\alpha$ then there is an efficient quantum algorithm for approximating the SIVP and GapSVP problems, to within $\tilde{\cal{O}}(n/\alpha)$ factors in the $\|.\|_2$ norm, in the worst case.
\end{thm}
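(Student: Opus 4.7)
The plan is to follow Regev's quantum reduction from worst-case lattice problems to average-case LWE. I would split the argument into three parts: (i) a purely classical reduction showing that the assumed distinguisher for $(\ZZ_q,n,\overline{\Psi}_\alpha)$-LWE yields an oracle that actually recovers the secret $\s$; (ii) a quantum reduction from approximating worst-case SIVP/GapSVP on an arbitrary $n$-dimensional lattice $\L$ to such a search-LWE oracle; and (iii) a bootstrapping step that chains many invocations of (ii). Part (i) proceeds coordinate by coordinate, using primality of $q$: one guesses $\s_1\in\ZZ_q$, applies an affine shift to the samples that eliminates the first coordinate exactly when the guess is correct, and uses the distinguisher to detect correctness; roughly $q\cdot n$ calls suffice.

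The heart of the argument is part (ii), an iterative subroutine that consumes polynomially many samples from the discrete Gaussian $\cal{D}_{\L,r}$ at some parameter $r$ above a small polynomial multiple of the smoothing parameter of $\L$ and produces a single sample from $\cal{D}_{\L,r'}$ with $r'/r$ of order $\sqrt{n}/(\alpha q)$. Internally, the Gaussian samples are used to build an instance of bounded-distance decoding on the dual lattice $\L^\ast$, which is then embedded as a search-LWE instance whose secret encodes the closest dual vector; the LWE oracle is invoked to recover that vector, after which a quantum Fourier transform on a superposition over cosets of $\L$ collapses, upon measurement, to the desired narrower Gaussian sample on $\L$. This is the step that forces the reduction to be quantum and is precisely where the hypothesis $q>2\sqrt{n}/\alpha$ enters: it guarantees $r'<r$ so that the iteration makes genuine progress and that the LWE noise is wide enough to mask the dual cosets before the transform.

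For part (iii), I would start with a very wide discrete Gaussian on $\L$, which can be sampled classically from any basis via Klein/GPV sampling applied to an LLL-reduced basis; standard smoothing-parameter bounds certify that the initial parameter lies sufficiently above $\eta_\epsilon(\L)$. Applying the iterative step $\mathrm{poly}(n)$ times drives the parameter down to a regime where typical samples have norm $\tilde{\cal{O}}(n/\alpha)\cdot\lambda_n(\L)$, at which point Banaszczyk's tail bound together with a linear-independence averaging argument yields a solution to $\tilde{\cal{O}}(n/\alpha)$-SIVP; the GapSVP statement then follows by observing that the same narrow samples certify an upper bound on $\lambda_1(\L)$ that distinguishes the two GapSVP cases.

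The main obstacle, as expected, is the correctness analysis of the quantum iterative step: one must show that the state obtained after classical preprocessing, LWE-oracle calls, and the Fourier transform is statistically close to $\cal{D}_{\L,r'}$. This demands careful control of Gaussian mass on cosets of $\L^\ast$ via Poisson summation, a union bound that keeps the accumulated statistical error negligible across $\mathrm{poly}(n)$ iterations and across the internal oracle calls, and the precise balancing between $\alpha$, $q$, and $r$ that yields both the threshold $q>2\sqrt{n}/\alpha$ and the final approximation factor $\tilde{\cal{O}}(n/\alpha)$.
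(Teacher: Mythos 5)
The paper does not prove this statement at all—it is imported verbatim from Regev's work—and your three-part outline (decision-to-search via coordinate guessing over the prime modulus, the quantum iterative step that trades discrete-Gaussian samples on $\L$ for a BDD oracle on $\L^\ast$ and converts it back to a narrower Gaussian via the Fourier transform, and the bootstrapping from an LLL-initialized wide Gaussian) is a faithful summary of exactly the proof the citation relies on, including the correct role of $q>2\sqrt{n}/\alpha$ in making the iteration contract. The only loosely stated point is the GapSVP clause, which in Regev's argument goes through a Goldreich--Goldwasser/Aharonov--Regev style distance-estimation procedure using the dual Gaussian samples rather than a direct upper bound on $\lambda_1$ read off from the samples' norms, but this does not affect the architecture.
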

Hence if we assume  the hardness of approximating the SIVP and GapSVP problems in lattices of dimension $n$ to within polynomial (in $n$) factors, then it follows from Theorem~\ref{thm:LWE} that deciding the LWE problem is hard when $n/\alpha$ is a polynomial in $n$.

In this paper, we also reduce the security of our scheme to the hardness of the Short Integer Solution (SIS) problem, stated as the following. For SIS problem that we use in this paper, we will deal with infinite norm $\|.\|_\infty$.

\begin{definition}[SIS problem]
	Consider publicly a prime $q$, positive integers $n,m$ and a positive real $\beta$. An $\text{SIS}^\infty_{n,q,\beta,m}$ problem instance is as follows: given a matrix $\bf{A}\in\ZZ_q^{n\times m}$, find a nonzero vector $\bf{s}\in\ZZ^m$ of norm $\|\bf{s}\|_\infty\leq \beta$ such that $\bf{A}\bf{s}=\bf{0}\mod q$.
\end{definition}

Hardness of SIS has first started by the seminal work of Ajtai~\cite{Ajtai96} and is subsequently improved by a series of work~\cite{MR04,GPV08,MP13}. In summary, we have the following.

\begin{thm}[{\cite[Theorem 2.7]{LM18}}]\label{thm:SIS}
	For any $\beta>0$ and any sufficiently large $q\geq\beta\cdot\sqrt{m}n^{\Omega(1)}$ with at most $n^{\Omega(1)}$ factors less than $\beta$, solving $\text{SIS}^\infty_{n,q,\beta,m}$ problem (on average, with nonnegligible probability $n^{-\Omega(1)}$) is at least as hard as solving SIVP in the worst case on any $n$-dimensional lattice within a factor $\gamma=\max\{1,\beta^2\sqrt{m}/q\}\cdot\title{O}(\beta\sqrt{nm})$.
	
	In particular, for any constant $\epsilon>0,\beta\leq n^\epsilon$, and $q\geq\beta\sqrt{m}n^\epsilon$, $\text{SIS}^\infty_{n,q,\beta,m}$  is hard on average under the assumption that SIVP is hard in the worst case for $\gamma=\tilde{O}(\beta\sqrt{nm})$.
\end{thm}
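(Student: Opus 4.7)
The plan is to follow the classical Ajtai-style worst-case-to-average-case reduction, as refined by Micciancio--Regev and Gentry--Peikert--Vaikuntanathan and specialised to the infinity-norm regime in~\cite{LM18}. Given an arbitrary $n$-dimensional lattice $\Lambda$ as a worst-case SIVP instance, the goal is to output $n$ linearly independent vectors of $\Lambda$ of length at most $\gamma\cdot\lambda_n(\Lambda)$ with $\gamma=\max\{1,\beta^2\sqrt{m}/q\}\cdot\tilde{O}(\beta\sqrt{nm})$, using a polynomial number of calls to an oracle that solves $\text{SIS}^\infty_{n,q,\beta,m}$ with nonnegligible probability $n^{-\Omega(1)}$ on a random instance.

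First I would invoke the GPV discrete Gaussian sampler with parameter $\sigma$ slightly above the smoothing parameter $\eta_\epsilon(q\Lambda)$ to draw $m$ independent samples $\bf{y}_1,\dots,\bf{y}_m$ from $\cal{D}_{\Lambda,\sigma}$. The key averaging fact is that whenever $\sigma\ge\eta_\epsilon(q\Lambda)$, each coset $\bf{y}_i+q\Lambda$ is statistically close to uniform on the finite quotient $\Lambda/q\Lambda\cong\ZZ_q^n$. Fixing such an isomorphism turns each $\bf{y}_i$ into a uniformly distributed $\bf{a}_i\in\ZZ_q^n$, so $\bf{A}=[\bf{a}_1\mid\cdots\mid\bf{a}_m]$ is a bona fide uniform SIS instance. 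The oracle returns a nonzero $\s\in\ZZ^m$ with $\|\s\|_\infty\le\beta$ and $\bf{A}\s=\bf{0}\bmod q$; by construction $\v:=\sum_i s_i\bf{y}_i\in q\Lambda$, so $\bf{w}:=\v/q$ lies in $\Lambda$.

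The main analytic step is bounding $\|\bf{w}\|$. Standard Gaussian tail bounds yield $\|\bf{y}_i\|\le\sigma\sqrt{n}\cdot\omega(\sqrt{\log n})$ with overwhelming probability, and a Khintchine-type estimate (which exploits that $\s$ is fixed before the remaining Gaussian mass of the $\bf{y}_i$ is revealed) gives $\|\v\|\le\tilde{O}(\beta\sqrt{m})\cdot\sigma\sqrt{n}$, hence $\|\bf{w}\|\le\tilde{O}(\beta\sqrt{nm})\cdot\sigma/q$. Taking $\sigma$ as small as the smoothing parameter permits, namely $\sigma\approx\sqrt{n}\cdot\lambda_n(\Lambda)$ via the Micciancio--Regev bound, produces the $\tilde{O}(\beta\sqrt{nm})$ factor in $\gamma$; the multiplicative correction $\max\{1,\beta^2\sqrt{m}/q\}$ enters only in the regime where $q$ is too small for this choice of $\sigma$ to both dominate $\eta_\epsilon(q\Lambda)$ and still make $\v/q$ shorter than $\lambda_n(\Lambda)$, forcing $\sigma$ to be rescaled by exactly that factor.

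The main obstacle will be the non-triviality argument: showing that $\bf{w}\ne\bf{0}$ with nonnegligible probability, since a pathological oracle might in principle return a combination that accidentally cancels the samples. The standard remedy is a conditional min-entropy argument: given the public $\bf{A}$, the residual distribution of each $\bf{y}_i$ is a discrete Gaussian on the coset $\bf{y}_i+q\Lambda$, and because $\sigma$ exceeds $\eta_\epsilon(q\Lambda)$ this coset distribution carries $\Omega(n\log q)$ bits of entropy; hence any fixed nonzero integer combination with $\|\s\|_\infty\le\beta$ vanishes with only negligible probability. Iterating the reduction $\mathrm{poly}(n)$ times and pruning to a linearly independent subset completes the SIVP solution. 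The ``in particular'' clause is then immediate: substituting $\beta\le n^\epsilon$ and $q\ge\beta\sqrt{m}n^\epsilon$ makes the $\max$ term collapse to $1$, leaving the clean bound $\gamma=\tilde{O}(\beta\sqrt{nm})$.
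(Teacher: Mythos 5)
Note first that the paper contains no proof of this statement: it is imported verbatim as {\cite[Theorem 2.7]{LM18}}, which is itself a restatement of the Micciancio--Peikert ``small parameters'' result \cite{MP13}. So there is nothing in the paper to compare your argument against; what can be judged is whether your sketch would actually establish the theorem as stated. Your skeleton is the right one for the classical $\ell_2$-norm Ajtai/Micciancio--Regev/GPV reduction: sample $\mathbf{y}_1,\dots,\mathbf{y}_m$ from a discrete Gaussian over the worst-case lattice above the smoothing parameter of $q\Lambda$, use the near-uniformity of the cosets in $\Lambda/q\Lambda\cong\mathbb{Z}_q^n$ to manufacture a uniform SIS instance, fold the oracle's answer back into a short vector of $\Lambda$, and rule out the zero vector by a conditional min-entropy argument. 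That part is sound (modulo one internal slip: since the relevant smoothing parameter is $\eta_\epsilon(q\Lambda)=q\cdot\eta_\epsilon(\Lambda)$, the minimal admissible $\sigma$ is $\approx q\cdot\omega(\sqrt{\log n})\cdot\lambda_n(\Lambda)$, not $\approx\sqrt{n}\cdot\lambda_n(\Lambda)$; with your stated $\sigma$ the bound $\|\mathbf{w}\|\leq\tilde{O}(\beta\sqrt{nm})\cdot\sigma/q$ would be too strong by a factor of $q$).

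The genuine gap is that the specifically $\ell_\infty$, small-modulus content of the theorem --- the factor $\max\{1,\beta^2\sqrt{m}/q\}$ and, above all, the hypothesis that $q$ has at most $n^{O(1)}$ divisors below $\beta$ --- is exactly what your sketch does not engage with, and it is the part that distinguishes this statement from the textbook reduction. The divisor condition is not cosmetic: if the oracle returns $\mathbf{s}$ with $d=\gcd(s_1,\dots,s_m,q)>1$, then $\sum_i s_i\mathbf{y}_i\in q\Lambda$ only tells you that $\sum_i (s_i/d)\mathbf{y}_i\in (q/d)\Lambda$, so the combination need not yield a vector of $\Lambda$ after dividing by $q$; the Micciancio--Peikert analysis handles this by working modulo the possible divisors of $q$ below $\beta$, and the $n^{O(1)}$ bound on their number is what keeps the reduction's loss polynomial. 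Your min-entropy argument for $\mathbf{w}\neq\mathbf{0}$ does not address this, and without it the reduction can fail for composite $q$ with small factors. Likewise, the $\max\{1,\beta^2\sqrt{m}/q\}$ correction is asserted to come from ``rescaling $\sigma$'' but never derived; in \cite{MP13} it arises from a quantitatively different balancing of the Gaussian width against $q$ in the regime $q<\beta^2\sqrt{m}$. As a high-level roadmap your proposal is faithful to the literature, but as a proof of this particular theorem it omits the two ingredients that make the theorem true in the stated generality.
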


\subsection{Ideal lattices}\label{sec:ideal lattices}
In this paper, we construct efficient schemes based on ideal lattices, i.e., lattices arising from polynomial rings. Specially, when $n$ is a power of two, we consider the ring $R:=\ZZ[x]/(x^n+1)$, the ring of integers of the cyclotomic number field $\QQ[x]/(x^n+1)$. The ring $R$ is isomorphic to the integer lattice $\ZZ^n$ through mapping a polynomial $f=\sum_{i=0}^{n-1}f_ix^i$ to its vector of coefficients $(f_0,f_1,\cdots,f_{n-1})$ in $\ZZ^n$. In this paper, we will consider such ring $R$ and denote by $R_q=R/qR=\ZZ_q[x]/(x^n+1)$ where $q=1\mod 2n$ is a prime. In this section, we follow~\cite{BertFRS18-implement} to recall some useful results in ideal lattices. 

We use ring variants of LWE, proposed by~\cite{LPR10}, and proven to be as hard as the SIVP problems on ideal lattices.

\begin{definition}[$\RLWE$] Given $\bf{a}=(a_1,\cdots,a_m)^T\in R_q^m$ a vector of $m$ uniformly random polynomials, and $\bf{b}=\bf{a}s+\bf{e}$ where $s\hookleftarrow U(R_q)$ and $\bf{e}\hookleftarrow D_{R^m,\sigma}$, distinguish $(\bf{a},\bf{b}=\bf{a}s+\bf{e})$ from $(\bf{a},\bf{b})$ drawn from the uniform distribution over $R_q^m\times R_q^m$.
\end{definition}

\begin{thm}[\cite{LPR13}]\label{thm:RLWE}
	For any $m=\text{poly}(n)$, cyclotomic ring (e.g., $R=\ZZ[x]/(x^n+1)$ with $n$ a power of $2$) of degree $n$ (over $\mathbb{Z}$), and appropriate choices of modulus $q$ and error distribution $D_{R,\alpha q}$ of error rate $\alpha<1$, solving the $\RLWE$ problem (with $\sigma=\alpha q$) is at least as hard as quantumly solving the $\text{SVP}_\gamma$ problem on arbitrary ideal lattices in $R$, for some $\gamma=\text{poly}(n)/\alpha$. 
\end{thm}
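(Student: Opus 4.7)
The plan is to follow the two-part quantum reduction template established by Regev for standard LWE and adapted to the ring setting by Lyubashevsky, Peikert, and Regev. The overall strategy is to reduce $\text{SVP}_\gamma$ on an arbitrary ideal lattice $I \subseteq R$ to the discrete Gaussian sampling problem on $I$, then reduce that, via an iterative quantum step, to a bounded distance decoding (BDD) problem on the dual ideal $I^\vee$, and finally reduce BDD on $I^\vee$ to $\RLWE$. Chaining these reductions yields $\text{SVP}_\gamma \le_{\text{quantum}} \RLWE$.

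First I would handle the classical \emph{BDD-to-$\RLWE$} reduction. Given a BDD instance on $I^\vee$, i.e.\ a target point $t + e$ with $t \in I^\vee$ and $\|e\|$ short, I would produce $\RLWE$ samples by drawing uniform $a_i \in R_q$ and setting $b_i = a_i \cdot \tilde t + e_i'$, where $\tilde t$ is a scaled/embedded version of the unknown $t$ and $e_i'$ is a fresh Gaussian whose distribution, combined with $a_i e$, matches $D_{R,\sigma}$ up to statistical distance. An $\RLWE$ solver then recovers $\tilde t$ coordinate by coordinate in the canonical embedding, and inverting the embedding recovers $t$. The main care here is the duality: $R \ne R^\vee$ in general, so one must work with $R^\vee$-module structure and use the trace pairing to move between $R$ and $R^\vee$.

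Second comes the quantum heart of the reduction: Regev's iterative step, upgraded to ideal lattices. Assuming one already has Gaussian samples on $I$ at width $r$ above the smoothing parameter, prepare a quantum superposition over $I^\vee / qI^\vee$ weighted by Gaussian amplitudes centered at shifts, use the BDD oracle (instantiated via $\RLWE$) coherently to uncompute the shift, and apply the quantum Fourier transform to obtain samples of $D_{I, r'}$ with $r' \approx r \cdot \sqrt{n}/(\alpha q)$, shrinking the width by a factor of roughly $\gamma = \tilde O(n/\alpha)$ per iteration. Polynomially many iterations, seeded by an initial large-width sampler obtainable from any basis of $I$, drive the width down to $O(\lambda_n(I))$, at which point a sample from $D_{I,r}$ is a short nonzero vector solving $\text{SVP}_\gamma$ with overwhelming probability by Banaszczyk's tail bound.

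The hard part, and the reason this is not a one-paragraph proof, will be controlling the noise and distributions precisely in the ring setting. Concretely, I expect three technical pain points: (i) proving smoothing parameter bounds for arbitrary ideal lattices in the cyclotomic ring that are polynomial in $n$, using the fact that ideal lattices in $R$ inherit geometric regularity from the canonical embedding; (ii) showing that the error term $a_i e + e_i'$ produced in the classical reduction is statistically close to the prescribed Ring-LWE error distribution $D_{R,\sigma}$, which requires Gaussian convolution lemmas over modules together with a careful choice of $e_i'$ depending on $a_i$; and (iii) verifying that the quantum iterative step composes, i.e.\ the output Gaussian from one iteration is a valid input for the next, despite accumulated statistical error. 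Once these are in place, the parameter bookkeeping delivers exactly $\gamma = \text{poly}(n)/\alpha$ as claimed.
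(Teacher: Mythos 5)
This theorem is imported by citation from Lyubashevsky--Peikert--Regev and the paper offers no proof of its own; your sketch is an accurate outline of the reduction in that cited work (the classical BDD-to-Ring-LWE step plus Regev's quantum iterative Gaussian-sampling step, with the $R$ versus $R^\vee$ duality and the error-convolution issues being exactly the technical points handled there). No gap to report relative to the paper, since you have correctly reconstructed the approach of the source the paper relies on.
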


We also use ring variants of SIS which are proven to be as hard as the SVP problems on ideal lattices.

\begin{definition}[$\RSIS$] Given a matrix $H\in R_q^{1\times m}$, find a non-zero vector $\bf{s}\in R_q^m$ such that $\|\bf{s}\|_\infty\leq\beta$ and $H\bf{s}=0\mod\! q$.
\end{definition}

\begin{thm}[\cite{LM06}]
	For $m>\log q/\log(2\beta), \gamma=16\beta\cdot m\cdot n\log^2n$, and $q\geq\frac{\gamma\cdot\sqrt{n}}{4\log n}$, solving the $\RSIS$ problem in uniformly random matrices in $R_q^{1\times m}$ is at least as hard as solving $\text{SVP}_\gamma^\infty$ in any ideal in the ring $\ZZ[x]/(x^n+1)$.
\end{thm}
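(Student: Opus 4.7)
The plan is to give an Ajtai-style worst-case to average-case reduction, adapted to the ring setting. Let $\mathcal{O}$ be an efficient oracle that, on input a uniformly random $H\in R_q^{1\times m}$, returns with nonnegligible probability a nonzero $\mathbf{s}\in R^m$ satisfying $\|\mathbf{s}\|_\infty\leq\beta$ and $H\mathbf{s}\equiv 0\pmod{q}$. I want to build from $\mathcal{O}$ an efficient algorithm that, given any (fractional) ideal $I\subseteq R=\mathbb{Z}[x]/(x^n+1)$, outputs a nonzero element of $I$ of infinity norm at most $\gamma\cdot\lambda_1^\infty(I)$.

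The first step is to draw $m$ independent discrete Gaussian vectors $\mathbf{v}_1,\dots,\mathbf{v}_m\in I$ of width $s$ chosen just above the smoothing parameter $\eta_\epsilon(qI)$; the standard bounds $\eta_\epsilon(qI)\leq q\,\eta_\epsilon(I)$ and $\eta_\epsilon(I)\leq O(\sqrt{\log n})\cdot\lambda_n^\infty(I)$, combined with the hypothesis $q\geq\gamma\sqrt{n}/(4\log n)$, let us take $s$ of order $q\sqrt{n\log n}\cdot\lambda_1^\infty(I)$. Because $q\equiv 1\pmod{2n}$, the polynomial $x^n+1$ splits completely over $\mathbb{F}_q$, which yields an explicit $R$-module isomorphism $I/qI\cong R/qR=R_q$; mapping each $\mathbf{v}_i\bmod qI$ through this isomorphism produces $a_i\in R_q$ whose joint distribution is $O(\epsilon m)$-close to uniform. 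Feeding $H:=(a_1,\dots,a_m)$ to $\mathcal{O}$ then yields, with nonnegligible probability, a nonzero $\mathbf{s}=(s_1,\dots,s_m)\in R^m$ with $\|s_i\|_\infty\leq\beta$ and $\sum_i a_i s_i\equiv 0\pmod{q}$.

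Lifting the identity back, the element $\mathbf{w}:=\sum_{i=1}^m s_i\mathbf{v}_i$ lies in $qI$, so $\mathbf{w}/q\in I$. A discrete Gaussian sample of width $s$ has infinity norm $O(s\sqrt{\log n})$ with overwhelming probability, and since $R=\mathbb{Z}[x]/(x^n+1)$ has $\ell_\infty$-expansion factor at most $n$, one gets $\|\mathbf{w}/q\|_\infty\leq m\cdot n\cdot\beta\cdot O(s\sqrt{\log n})/q$. Substituting the chosen $s$ gives $\|\mathbf{w}/q\|_\infty\leq 16\beta\cdot m\cdot n\cdot\log^2 n\cdot\lambda_1^\infty(I)=\gamma\cdot\lambda_1^\infty(I)$, matching the approximation factor in the statement.

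The step I expect to be the main obstacle is arguing that $\mathbf{w}/q$ is nonzero with nonnegligible probability. The key observation is that the oracle's output $\mathbf{s}$ depends on the $\mathbf{v}_i$ only through the cosets $\mathbf{v}_i+qI$; conditioned on those cosets, each $\mathbf{v}_i$ is independently distributed as a discrete Gaussian on a coset of $qI$ of width $s\geq\eta_\epsilon(qI)$, and hence carries min-entropy $\omega(\log n)$. Because $\mathbf{s}\neq 0$ and $R$ is an integral domain, some coordinate $s_{i_0}\neq 0$ is a non-zero-divisor, so the equation $\sum_i s_i\mathbf{v}_i=0$ would force $\mathbf{v}_{i_0}$ to equal a single value determined by $\mathbf{s}$ and the remaining $\mathbf{v}_j$. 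By the min-entropy bound this event has negligible probability; hence $\mathbf{w}\neq 0$ except with negligible probability, and $\mathbf{w}/q$ provides the desired nonzero short vector in $I$, completing the reduction.
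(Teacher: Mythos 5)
First, a framing remark: the paper does not prove this statement at all --- it is quoted verbatim as a known result of Lyubashevsky and Micciancio \cite{LM06} --- so there is no in-paper proof to compare against, and your sketch has to be judged against the actual reduction in \cite{LM06}.

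Your skeleton is the right one (sample Gaussians over the ideal $I$, reduce modulo $qI$ to manufacture a uniform Ring-SIS instance, fold the oracle's short kernel vector into an element of $qI$, divide by $q$, and rule out the zero outcome by a conditional min-entropy argument), and your non-vanishing step --- using that $R$ is an integral domain so a nonzero coordinate $s_{i_0}$ pins down $\mathbf{v}_{i_0}$ given the other data --- is correct. But there is a genuine gap at the very first step: you cannot efficiently sample a discrete Gaussian over $I$ at width ``just above $\eta_\epsilon(qI)$'', i.e.\ of order $q\sqrt{n\log n}\cdot\lambda_1^\infty(I)$, unless you already hold vectors of $I$ of comparable length; $\lambda_1^\infty(I)$ is not even known to the reduction, and an LLL-reduced basis only supports sampling at exponentially larger widths. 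This is exactly why the reduction in \cite{LM06} is iterative: one maintains a full-rank set of vectors of $I$ of current length $\ell$, samples at width proportional to $\ell$, uses the oracle to produce a new vector of length roughly $\frac{\beta m n\cdot\mathrm{polylog}(n)}{q}\,\ell$, and repeats; the hypothesis $q\geq\gamma\sqrt{n}/(4\log n)$ is precisely what makes each round shrink $\ell$ by a constant factor until $\ell$ stalls at $\gamma\cdot\lambda_1^\infty(I)$. Your single-shot version never uses that hypothesis, and your own arithmetic does not close: with $s=\Theta\bigl(q\sqrt{n\log n}\cdot\lambda_1^\infty(I)\bigr)$, the bound $m\cdot n\cdot\beta\cdot O(s\sqrt{\log n})/q$ comes out to $O(\beta m n^{3/2}\log n)\cdot\lambda_1^\infty(I)$, which overshoots the claimed $\gamma=16\beta m n\log^2 n$ by a factor of about $\sqrt{n}/\log n$. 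You also never invoke $m>\log q/\log(2\beta)$, which in \cite{LM06} guarantees by pigeonhole that short solutions exist, so that the oracle's nonnegligible success probability is meaningful. A cosmetic point: the isomorphism $I/qI\cong R_q$ holds because $I$ is an invertible ideal of the Dedekind domain $R$, not because $x^n+1$ splits modulo $q$.
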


In this paper, we use the ring version of trapdoors for ideal lattices, introduced in~\cite{MP12} and recently improved in~\cite{GM18}.
\begin{definition}
	Define $\bf{g}=(1,2,4,\cdots,2^{k-1})^T\in R_q$ with $k=\lceil\log_2(q)\rceil$ and call $\bf{g}$ the \textbf{gadget vector}, i.e., for which the inversion of $f_{\bf{g}^T}(\bf{z})=\bf{g}^T\bf{z}\in R_q$ is easy. The lattice $\L_q^\perp(\bf{g}^T)$ has a publicly known basis $B_q\in R^{k\times k}$ which satisfies that $\|\widetilde{B}_q\|\leq\sqrt{5}$.
\end{definition}

\begin{definition}[$\bf{g}$-trapdoor]
	Let $\bf{a}\in R_q^m$ and $\bf{g}\in R_q^k$ with $k=\lceil\log_2(q)\rceil$ and $m>k$. A $\bf{g}$-trapdoor for $\bf{a}$ consist in a matrix of small polynomials $T\in R^{(m-k)\times k}$, following a discrete Gaussian distribution of parameter $\sigma$, such that 
	$$\bf{a}^T\left(\begin{array}{c}T \\ I_k \end{array} \right) =h\bf{g}^T$$
	for some invertible element $h\in R_q$. The polynomial $h$ is called the tag associated to $T$. The quality of the trapdoor is measured by its largest singular value $s_1(T)$.
\end{definition}
The Algorithm~\ref{algo:TrapGen} shows how to generate a random vector $\bf{a}\in R_q^m$ together with its trapdoor $T$.

\begin{algorithm}
	\begin{itemize}
		\item{Input:} the ring modulus $q$, a Gaussian parameter $\sigma$, optional $\bf{a}'\in R_q^{m-k}$ and $h\in R_q$. If no $\bf{a}', h$ are given, the algorithm chooses $\bf{a}'\hookleftarrow U(R_q^{m-k})$ and $h=1$.
		\item{Output:} $\bf{a}\in R_q^m$ with its trapdoor $T\in R^{(m-k)\times k}$, of norm $\| T\|\leq t\sigma\sqrt{(m-k)n}$ associated to the tag $h$.
	\end{itemize}
	\begin{enumerate}
		\item Choose $T\hookleftarrow D_{R^{(m-k)\times k},\sigma}, \bf{a}=(\bf{a}'^T|h\bf{g}-\bf{a}'^TT	)^T$.
		\item Return $(\bf{a},T)$.
	\end{enumerate}
	\caption{Algorithm  $\TrapGen(q,\sigma,\bf{a}',h)$}
	\label{algo:TrapGen}
\end{algorithm}

One of the main algorithm we use in our scheme is the preimage sampling algorithm $\SamplePre$, illustrated in Algorithm~\ref{algo:SamplePre}, which finds $\bf{x}$ such that $f_{\bf{a}^T}(\bf{x})=u$ for a given $u\in R_q$ and a public $\bf{a}\in R_q^m$ using the $\bf{g}$-trapdoor $T$ of $\bf{a}$, where $(\bf{a},T)\gets\TrapGen(q,\sigma,\bf{a}',h)$ as in Algorithm~\ref{algo:TrapGen}.

\begin{algorithm}\label{algo:SamplePre}
	\begin{itemize}
		\item{Input:} $\bf{a}\in R_q^m$, with its trapdoor $T\in R^{(m-k)\times k}$ associated to an invertible tag $h\in R_q$, $u\in R_q$ and $\zeta,\sigma,\alpha$ three Gaussian parameters
		\item{Output} $\bf{x}\in R_q^m$ following a discrete Gaussian distribution of parameters $\zeta$ satisfying $\bf{a}^T\bf{x}=u\in R_q$.
	\end{itemize}
	\begin{enumerate}
		\item $\bf{p}\gets\mathsf{SampleP}(q,\zeta,\alpha,T)$, $v\gets h^{-1}(u-\bf{a}^T\bf{p})$
		
		\item $\bf{z}\gets\mathsf{SamplePolyG}(\sigma,v)$, $\bf{x}\gets \bf{p}+\left(\begin{array}{c}T \\ I_k \end{array} \right)\bf{z}$
	\end{enumerate}
	\caption{Algorithm  $\mathsf{SamplePre}(T,\bf{a},h,\zeta,\sigma,\alpha,u)$}
\end{algorithm}

The algorithm~$\SamplePre$ uses the following two algorithm:
\begin{itemize}
	\item The algorithm $\SampleP(q,\zeta,\alpha,T)$ on input the ring modulus $q$, $\zeta$ and $\alpha$ two Gaussian parameters and $T\hookleftarrow D_{R^{(m-k)\times k},\sigma}$, output $\bf{p}\hookleftarrow D_{R^m,\sqrt{\Sigma_{\bf{p}}}}$ where $\Sigma_\bf{p}=\zeta^2I_m-\alpha^2\left(\begin{array}{c}
		T \\ I_k
	\end{array}\right)(T^T I_k)$.
	\item The algorithm $\mathsf{SamplePolyG}$ on input a Gaussian parameter $\sigma$ and a target $v\in R_q$, outputs $\bf{z}\hookleftarrow D_{\Lp_q(\bf{g}^T),\alpha,v}$ with $\alpha=\sqrt{5}\sigma$.
\end{itemize}
For the special case of the cyclotomic number ring $R=\ZZ[x]/(x^n+1)$ in our paper, the algorithm ~$\SampleP$ can be efficiently implemented as in~{\cite[Section 4.1]{GM18}}. For algorithm~$\mathsf{SamplePolyG}$, one needs to call the algorithm~$\mathsf{SampleG}$ in~{\cite[Section 3.2]{GM18}} $n$ times.

Our construction follows the IBE construction in~\cite{ABB10-EuroCrypt}. In such a case, we need an encoding hash function $H:\ZZ_q^n\to R_q$ to map identities to $\ZZ_q^n$ to invertible elements in $R_q$; such an $H$ is called \textit{encoding with Full-Rank Differences (FRD)} in \cite{ABB10-EuroCrypt}. We require that $H$ satisfies the following properties:
\begin{itemize}
	\item for all distinct $u,v\in\ZZ_q^n$, the element $H(u)-H(v)\in R_q$ is invertible; and
	\item $H$ is computable in polynomial time (in $n\log(q)$).
\end{itemize}
To implement such an encoding $H$, there are several methods proposed in~\cite{GM18,DM14,LS18} and we refer to {\cite[Section 2.4]{GM18}} for more details.

\subsection{Strong one-time signature}	\label{sec:one-time signature}
In this paper, we utilize the one-time signature proposed by Lyubashevsky and Micciancio in~\cite{LM18}. The signature is proven to be strongly unforgeable under the hardness of SIS/Ring-SIS problem.
\subsubsection{Strong one-time signature from SIS}
The scheme is parameterized by
\begin{itemize}
	\item integers $m,k,n,q,b,w$
	\item $\mathcal{H}=\mathbb{Z}_q^{n\times m}$,
	\item $\mathcal{K} = \{\mathbf{K}\in\ZZ_q^{m\times k}: \|\mathbf{K}\|_\infty\leq b \}$,
	\item $\mathcal{M}\subseteq\{\mathbf{m}\in\{0,1\}^k: \|\bf{m}\|_1=w \}$,
	\item $\mathcal{S} =\{ \mathbf{s}\in\ZZ_q^m : \|\mathbf{s}\|_\infty\leq wb\}$.
\end{itemize}
The scheme consists of the following four algorithms.
\begin{itemize}
	\item Setup: A random matrix $H\in\cal{H}$ is chosen and can be shared by all users.
	\item Key Generation: A secret key $\mathbf{K}\in\mathcal{K}$ is chosen uniformly at random. Output the public key $\mathbf{K}'=\mathbf{H}\mathbf{K}\in\ZZ_q^{n\times k}$.
	\item Signing: Given a message $\bf{m}\in\mathcal{M}$, and the secret key $\mathbf{K}$, output a signature $\mathbf{s}=\mathbf{K}\mathbf{m}\in\ZZ_q^m$.
	\item Verification: Given a message and signature pair $(\mathbf{m},\mathbf{s})$, checks if $\mathbf{s}\in\mathcal{S}$ and $\mathbf{H}\mathbf{s}=\mathbf{K}'\mathbf{m}$. 
\end{itemize}
It is easy to check the correctness of the scheme. We have the following.
\begin{thm}[\cite{LM18}]\label{sig strong SIS}
	For any $\epsilon>0$, let $q\geq 2wb\sqrt{m}n^\epsilon$ and $b=\lceil\frac{q^{n/m}2^{\lambda/m}-1}{2}\rceil$. Then the one-time signature scheme above is strongly unforgeable under the assumption that $\text{SIVP}_\gamma$ is hard in the worst case for $\gamma=\tilde{O}(wb\sqrt{nm})\max\{1,2wb/n^\epsilon\}$.
	
	In particular, for $m=\lceil(\lambda+n\log_2q)/\log_23\rceil$, $b=1$ and $q\geq 2w\sqrt{m}n^\epsilon$, the scheme is strongly unforgeable under the assumption that $\text{SIVP}_\gamma$ is hard in the worst case for $\gamma=\tilde{O}(w\sqrt{nm})\max\{1,2w/n^\epsilon\}$.
	
\end{thm}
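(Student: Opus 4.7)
The plan is a direct reduction from $\text{SIS}^\infty_{n,q,2wb,m}$ to strong one-time unforgeability of the scheme. Given an SIS challenge matrix $H \in \ZZ_q^{n \times m}$, the reduction plays the role of the challenger in the unforgeability game: it samples a secret key $\mathbf{K} \leftarrow \mathcal{K}$ uniformly at random, publishes $\mathbf{K}' := H\mathbf{K}$ as the public key, and, upon receiving the adversary $\mathcal{A}$'s single signing query on a message $\mathbf{m}^*\in\mathcal{M}$, responds honestly with $\mathbf{s}^* := \mathbf{K}\mathbf{m}^*$. Every step matches the real key-generation and signing algorithms, so $\mathcal{A}$'s view in the simulation is identical to its view in the real game and it forges with its full advantage $\varepsilon$.

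Suppose $\mathcal{A}$ outputs a valid forgery $(\mathbf{m}',\mathbf{s}')\neq(\mathbf{m}^*,\mathbf{s}^*)$ with $\mathbf{s}'\in\mathcal{S}$ and $H\mathbf{s}' = \mathbf{K}'\mathbf{m}' = H\mathbf{K}\mathbf{m}'$. Set $\mathbf{z} := \mathbf{s}' - \mathbf{K}\mathbf{m}' \in \ZZ^m$. Then $H\mathbf{z} \equiv \mathbf{0}\pmod q$ and
\[
\|\mathbf{z}\|_\infty \;\leq\; \|\mathbf{s}'\|_\infty + \|\mathbf{K}\mathbf{m}'\|_\infty \;\leq\; wb + wb \;=\; 2wb,
\]
using $\mathbf{s}'\in\mathcal{S}$, $\|\mathbf{K}\|_\infty\leq b$ and $\|\mathbf{m}'\|_1=w$. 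Thus $\mathbf{z}$ is a legal solution to $\text{SIS}^\infty_{n,q,2wb,m}$ as soon as $\mathbf{z}\neq\mathbf{0}$; the reduction simply outputs it.

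The crux is to show $\Pr[\mathbf{z}\neq\mathbf{0}]$ is non-negligible, and this is where the two cases split. If $\mathbf{m}'=\mathbf{m}^*$, then the definition of a forgery forces $\mathbf{s}'\neq\mathbf{s}^* = \mathbf{K}\mathbf{m}^* = \mathbf{K}\mathbf{m}'$, so $\mathbf{z}\neq\mathbf{0}$ deterministically. If $\mathbf{m}'\neq\mathbf{m}^*$, one invokes the entropy baked into the parameter $b$: the choice $b=\lceil(q^{n/m}2^{\lambda/m}-1)/2\rceil$ gives $(2b+1)^m\geq q^n\cdot 2^\lambda$, so that the average fiber of $\mathbf{K}\mapsto H\mathbf{K}$ has size at least $2^{\lambda k}$. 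A leftover-hash / pigeonhole argument then shows that, conditioned on the entire view $(H,\mathbf{K}',\mathbf{s}^*)$, the key $\mathbf{K}$ still lives in a fiber of exponential size, and since $\mathbf{m}'-\mathbf{m}^*\in\{-1,0,1\}^k$ is a nonzero vector, the $\mathbb{Z}$-linear map $\mathbf{K}\mapsto \mathbf{K}(\mathbf{m}'-\mathbf{m}^*)$ is non-constant on this fiber. Hence $\Pr[\mathbf{K}\mathbf{m}'=\mathbf{s}']\leq 2^{-\Omega(\lambda)}$ for any fixed $\mathbf{s}'$ the adversary can commit to.

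Combining the two cases yields $\Pr[\mathbf{z}\neq\mathbf{0}] \geq 1-2^{-\Omega(\lambda)}$, so a forger of advantage $\varepsilon$ gives an $\text{SIS}^\infty_{n,q,2wb,m}$ solver of advantage $\varepsilon-\mathrm{negl}(\lambda)$. Plugging $\beta=2wb$ and $q\geq 2wb\sqrt{m}n^{\epsilon}$ into Theorem~\ref{thm:SIS} converts this into worst-case hardness of $\text{SIVP}_\gamma$ on $n$-dimensional lattices with $\gamma = \tilde{O}(wb\sqrt{nm})\cdot\max\{1,2wb/n^\epsilon\}$, which is the first conclusion; the specialisation to $b=1$ and $m=\lceil(\lambda+n\log_2 q)/\log_2 3\rceil$ follows by direct substitution into the same bounds. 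The main obstacle to making this fully rigorous is the entropy/fiber argument in the case $\mathbf{m}'\neq\mathbf{m}^*$: one must verify carefully that the $nk+m$ linear constraints on $\mathbf{K}$ revealed by $(\mathbf{K}',\mathbf{s}^*)$ do not pin the value $\mathbf{K}\mathbf{m}'$ down modulo $q$, which is precisely what the calibrated bound on $b$ is designed to guarantee.
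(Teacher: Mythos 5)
First, a remark on the comparison itself: the paper gives no proof of this statement — it is imported verbatim from~\cite{LM18} as a cited result — so there is no in-paper argument to measure yours against, only the original source. Your overall strategy is the correct (and the original) one: simulate the challenger honestly with a known $\mathbf{K}$, extract $\mathbf{z}=\mathbf{s}'-\mathbf{K}\mathbf{m}'$ with $H\mathbf{z}=\mathbf{0}\bmod q$ and $\|\mathbf{z}\|_\infty\le 2wb$, split on whether $\mathbf{m}'=\mathbf{m}^*$, and finish by invoking Theorem~\ref{thm:SIS} with $\beta=2wb$; the parameter bookkeeping for $\gamma$ is also right.

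The genuine gap is exactly where you flag it, and your proposed justification does not close it as stated. That the fiber $\{\mathbf{K}\in\mathcal{K}: H\mathbf{K}=\mathbf{K}',\ \mathbf{K}\mathbf{m}^*=\mathbf{s}^*\}$ is exponentially large does \emph{not} by itself imply that $\mathbf{K}\mapsto\mathbf{K}(\mathbf{m}'-\mathbf{m}^*)$ is non-constant on it: a linear map can be constant on an arbitrarily large set of keys, so ``large fiber'' plus ``nonzero direction'' is not a valid inference. The argument in~\cite{LM18} instead isolates a single column. Since $\mathbf{m}',\mathbf{m}^*\in\{0,1\}^k$ are distinct with $\|\mathbf{m}'\|_1=\|\mathbf{m}^*\|_1=w$, there is an index $j$ with $m'_j=1$ and $m^*_j=0$; the column $\mathbf{k}_j$ then does not occur in $\mathbf{s}^*=\sum_{i:m^*_i=1}\mathbf{k}_i$, so conditioned on the adversary's entire view it is constrained only by $H\mathbf{k}_j=\mathbf{k}'_j$. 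The calibration $(2b+1)^m\ge q^n2^\lambda$ guarantees, by a counting/Markov argument over the choice of $\mathbf{k}_j$, that except with probability $2^{-\Omega(\lambda)}$ this residual fiber contains at least two points, and any two of them differ modulo $q$ because $2b<q$; replacing $\mathbf{k}_j$ by another point of the fiber changes $\mathbf{K}\mathbf{m}'$ while leaving the view fixed, so $\Pr[\mathbf{s}'=\mathbf{K}\mathbf{m}']\le \tfrac12+2^{-\Omega(\lambda)}$ rather than the $2^{-\Omega(\lambda)}$ you claim. The reduction therefore succeeds with probability $\varepsilon/2-\mathrm{negl}(\lambda)$; the factor $\tfrac12$ should be carried into the final advantage, though it is absorbed by the asymptotic statement. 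With that substitution your proof is complete.
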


\subsubsection{Strong one-time signature from Ring-SIS}
The scheme is parameterized by
\begin{itemize}
	\item integers $m,k,n$,
	\item a ring $R_q=\ZZ_q[x]/(x^n+1)$
	\item $\cal{H}= R_q^{1\times m}$
	\item $\cal{K} =\{[\bf{k}_1,\bf{k}_2]\in R_q^{m\times 2}: \|\bf{k}_1\|_\infty\leq b, \|\bf{k}_2\|_\infty\leq wb\}$
	\item $\cal{M}\subseteq\{\bf{m}=[m_1,1]^T\in R_q^2, \|m_1\|_\infty\leq 1, \|m_1\|_1\leq w \}$
	\item $\cal{S}=\{\bf{s}\in R_q^m: \|\bf{s}\|_\infty \leq 2wb \}$.
\end{itemize}

The scheme consists of four algorithms described as follows.
\begin{itemize}
	\item Setup: A random matrix $H\in\cal{H}$ is chosen and can be shared by all users.
	\item Key Generation: A secret key $K\in\cal{K}$ is chosen uniformly at random. Output the public key $\hat{K}=HK\in R_q^{1\times 2}$.
	\item Signing: Given a message $\bf{m}\in\cal{M}$ and the secret key $K$, output a signature $\bf{s}=K\bf{m}\in R_q^m$.
	\item Verification: Given a message-signature pair $(\bf{m},\bf{s})$, checks if $\bf{s}\in\cal{S}$ and $H\bf{s}=\hat{K}\bf{m}$.
\end{itemize}

The correctness of the scheme is clear. We have the following.
\begin{thm}[\cite{LM18}]\label{sig strong RSIS}
	Let $b=\lfloor (|\cal{M}|^{1/n}2^{\lambda/n}q)^{1/m}\rceil$ and $q>8wb$. Then the one-time signature described above is strongly unforgeable based on the assumed average-case hardness of $\textbf{Ring-SIS}_{n,m,q,4wb}$ problem. Furthermore, for $\gamma=64wbmn\log^2n$ and $q\geq\frac{\gamma\sqrt{n}}{4\log n}$, the scheme is secure based on the worst-case hardness of $\text{SVP}_\gamma^\infty$ in all $n$-dimensional ideals of the ring $\ZZ[x]/(x^n+1)$.
\end{thm}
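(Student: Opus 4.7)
The strategy is to reduce $\textbf{Ring-SIS}_{n,m,q,4wb}$ to breaking the one-time strong unforgeability of the scheme. Given a challenge matrix $H\in R_q^{1\times m}$, embed it as the scheme's shared public matrix, sample the signing key $K\in\cal K$ uniformly, hand the adversary the verification key $\hat K=HK$, and answer the single signing query on $\bf{m}\in\cal M$ honestly with $\bf{s}=K\bf{m}$. A valid forgery $(\bf{m}^*,\bf{s}^*)\neq(\bf{m},\bf{s})$ satisfies $H\bf{s}^*=\hat K\bf{m}^*=HK\bf{m}^*$, so the vector $\bf{z}:=\bf{s}^*-K\bf{m}^*\in R_q^m$ lies in the kernel of $H$. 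The bound $\|K\bf{m}^*\|_\infty\leq\|\bf{k}_1\|_\infty\|m_1^*\|_1+\|\bf{k}_2\|_\infty\leq 2wb$, together with $\|\bf{s}^*\|_\infty\leq 2wb$, gives $\|\bf{z}\|_\infty\leq 4wb$, so $\bf{z}$ is a valid $\textbf{Ring-SIS}_{n,m,q,4wb}$ solution provided $\bf{z}\neq\bf{0}$.

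The nontrivial step is arguing $\Pr[\bf{z}\neq\bf{0}]$ is non-negligible. Conditioned on the adversary's view $(\hat K,\bf{s})$, the reduction's signing key $K$ is uniform over the fiber $F=\{K'\in\cal K:HK'=\hat K,\ K'\bf{m}=\bf{s}\}$. The calibration $b=\lfloor(|\cal M|^{1/n}2^{\lambda/n}q)^{1/m}\rceil$ is tuned precisely so that $|\cal K|$ exceeds the number of consistent views by a factor of roughly $|\cal M|\cdot 2^\lambda$, whence by averaging $|F|\geq|\cal M|\cdot 2^\lambda$ with probability $1-2^{-\Omega(\lambda)}$ over the view. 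Fix the adversary's coins so that $(\bf{m}^*,\bf{s}^*)$ is a deterministic function of the view. If $\bf{m}^*=\bf{m}$ then strong unforgeability forces $\bf{s}^*\neq\bf{s}=K\bf{m}^*$, giving $\bf{z}\neq\bf{0}$ automatically. If $\bf{m}^*\neq\bf{m}$, write $\bf{m}=[m_1,1]^T$ and $\bf{m}^*=[m_1^*,1]^T$, and substitute $\bf{k}_2=\bf{s}-\bf{k}_1 m_1$ to reduce the event $\bf{z}=\bf{0}$ to the linear condition $\bf{k}_1(m_1^*-m_1)=\bf{s}^*-\bf{s}$ on $\bf{k}_1$ alone. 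A union bound over the at most $|\cal M|$ candidate messages $\bf{m}^*$ then yields $\Pr[\bf{z}=\bf{0}]\leq 2^{-\Omega(\lambda)}$, and the reduction's Ring-SIS advantage is within a negligible additive term of the adversary's forging advantage.

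The main obstacle is controlling the multiplication map $\bf{k}_1\mapsto\bf{k}_1(m_1^*-m_1)$ in the previous paragraph, because $q\equiv 1\pmod{2n}$ makes $x^n+1$ split completely in $R_q$ and the small difference $m_1^*-m_1$ can be a zero divisor, so the map may collapse several $\bf{k}_1$ onto a common image. One handles this by CRT-decomposing $R_q$ into its $n$ linear factors and observing that $m_1^*\neq m_1$ forces $m_1^*-m_1$ to be a unit in at least one factor, which bounds the collapse by a factor that the calibration of $b$ was chosen to absorb. The \emph{furthermore} clause is then immediate by composing with the worst-case to average-case reduction from $\text{SVP}_\gamma^\infty$ on ideals of $R$ to $\textbf{Ring-SIS}_{n,m,q,\beta}$ recalled before the theorem, instantiated at $\beta=4wb$, which yields $\gamma=16\beta m n\log^2 n=64wb m n\log^2 n$ and the stated lower bound on $q$.
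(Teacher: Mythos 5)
The paper does not actually prove this statement: Theorem~\ref{sig strong RSIS} is imported verbatim from Lyubashevsky--Micciancio~\cite{LM18} as a black box, so there is no in-paper argument to compare yours against. Judged on its own, your skeleton is the standard one and the bookkeeping is right: embedding the Ring-SIS challenge as the shared matrix $H$, the estimate $\|K\bf{m}^*\|_\infty\leq\|\bf{k}_1\|_\infty\|m_1^*\|_1+\|\bf{k}_2\|_\infty\leq 2wb$ giving $\|\bf{z}\|_\infty\leq 4wb$ (with $q>8wb$ ensuring a short nonzero vector is genuinely nonzero modulo $q$), and the composition with the worst-case reduction of Lyubashevsky--Micciancio at $\beta=4wb$, which yields exactly $\gamma=16\beta mn\log^2 n=64wbmn\log^2 n$ and the stated bound on $q$.

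The gap is at the crux, which you correctly name but do not close. Two claims carry the whole proof and both are asserted rather than established. First, the fiber bound $|F|\geq|\cal{M}|\cdot 2^{\lambda}$ except with probability $2^{-\Omega(\lambda)}$: this needs an explicit count relating $|\cal{K}|=(2b+1)^{nm}(2wb+1)^{nm}$ to the number of views $(\hat{K},\bf{s})$, and in particular one must verify that conditioning on $K\bf{m}=\bf{s}$, which forces $\bf{k}_2=\bf{s}-\bf{k}_1 m_1$, keeps $\bf{k}_2$ inside its box $\|\bf{k}_2\|_\infty\leq wb$ for enough of the admissible $\bf{k}_1$; your sketch does not address this. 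Second, and more seriously, the bound on $\Pr[\bf{k}_1(m_1^*-m_1)=\bf{s}^*-\bf{s}]$ over the conditional distribution of $\bf{k}_1$: with $q\equiv 1\pmod{2n}$ the ring $R_q$ splits completely, the solution set is a coset of the annihilator of $m_1^*-m_1$, and a density argument in $R_q^m$ does not transfer to the box $\{\|\bf{k}_1\|_\infty\leq b\}$ --- a coset of a sublattice that is sparse in only one CRT slot can still contain a large fraction of a small box, and it must further be intersected with the fiber $F$ rather than with all of $\cal{K}$. Closing this requires either a genuine count of coset points inside the box, or a choice of ring and modulus under which small nonzero differences $m_1^*-m_1$ are invertible. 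Until one of these is supplied, the assertion that ``the calibration of $b$ was chosen to absorb'' the collapse is a statement of intent rather than a proof.
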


When basing the problem on the worst-case hardness of SVP on ideal lattices, we can set $w=O(n/\log n), b=1, m=O(n\log n)$, modulus $q=n^{2.5}\log n$ and the worst-case approximation factor $\gamma=O(n^2\log^2n)$.


\section{PKEET over Ideal Lattices}\label{sec:construction-ideal}
In this section, we propose a CCA2-secure PKEET over ideal lattices. The scheme is inherited from the one in.
Our scheme is presented in Section~\ref{sec:construction ideal}. The correctness of the scheme and the choice of parameters are presented in Section~\ref{sec:Params-ideal} while the security analysis is presented in Section~\ref{sec: security ideal}.  
\subsection{Construction}\label{sec:construction ideal}
The proposed PKEET consists of the following algorithms.

\medskip
\noindent $\Setup(1^n)$:
On input the security parameter $1^n$, do the following:
\begin{enumerate}
	\item Generate $\bf{a}\in R_q^m$ and its trapdoor $T_\bf{a}\in R^{(m-k)\times k}$ by $(\bf{a},T_\bf{a})\gets\TrapGen(q,\sigma,h=0)$, i.e., 
	$\bf{a}=((\bf{a}')^{T} | -(\bf{a}')^{T}T_\bf{a})^T$
	\item Generate $\bf{b}\in R_q^m$ and its trapdoor $T_\bf{b}\in R^{(m-k)\times k}$ by $(\bf{b},T_\bf{b})\gets\TrapGen(q,\sigma,h=0)$, i.e., 
	$\bf{b}=((\bf{b}')^{T} | -(\bf{b}')^{T}T_\bf{b})^T$
	\item Sample uniformly random $u\hookleftarrow U(R_q)$.
	\item Output $\PK = (\bf{a},\bf{b},u)\in R_q^{2m+1}$ and $\SK=(T_\bf{a},T_\bf{b})$
\end{enumerate} 

\medskip
\noindent $\Enc(\PK,M)$]:
Given a message $M\in R_2$, do the following
\begin{enumerate}
	\item Sample $s_1,s_2\hookleftarrow U(R_q)$, $e'_1,e'_2\hookleftarrow D_{R,\tau}$ and compute
	\begin{align*}
		\CT_1 &= u\cdot s_1+e'_1 +M\cdot\lfloor q/2\rfloor\in R_q,\\
		\CT_2&=u\cdot s_2 +e'_2 + H(M)\cdot\lfloor q/2\rfloor\in R_q.
	\end{align*}
	
	where $H$ is a hash function mapping from $\{ 0,1\}^*$ to the message space $\cal{M}$.
	\item Choose $\bf{k}\in R^{(m-k)\times 2}$ and compute $\bf{v}=\bf{a'}^T\cdot\bf{k}\in R^{2}$.
	\item Compute ${h}=H_1(\bf{v})\in R_q$, where $H_1$ is a hash function that maps $\{0,1\}^*$ to invertible elements in $R_q$.
	\item Compute $\bf{a}_h = \bf{a}^T + (\bf{0} | h\bf{g})^T = ((\bf{a}')^T | h\bf{g}-(\bf{a}')^TT_\bf{a})^T$.
	\item Compute $\bf{b}_h = \bf{b}^T + (\bf{0} | h\bf{g})^T = ((\bf{b}')^T | h\bf{g}-(\bf{b}')^TT_\bf{b})^T$.
	\item Choose $\bf{y},\bf{y}'\hookleftarrow D_{R^{m-k},\tau}, \bf{z}, \bf{z}'\in D_{R^k,\gamma}$ and compute
	\begin{align*}
		\CT_3 &= \bf{a}_h\cdot s_1 +(\bf{y}^T,\bf{z}^T)^T\in R_q^m,\\
		\CT_4 &= \bf{b}_h\cdot s_2 +((\bf{y}')^T,(\bf{z}')^T)^T\in R_q^m.
	\end{align*}
	\item Compute ${m}=H_2(\CT_1\|\CT_2\|\CT_3\|\CT_4)$ and $\sigma=\bf{k}\cdot[{m},1]^T$. Here $H_2$ is a function that maps from $\{0,1\}^*$ into the space $\{{m}\in R : \|m\|_\infty\leq 1, \|m\|_1\leq \delta \}$ with $\delta=O(n/\log n)$.
	
	\item Output the ciphertext
	$$\CT=(\sigma,\bf{v},\CT_1,\CT_2,\CT_3,\CT_4)\in R_q^{3m+4-k}.$$
\end{enumerate}
\medskip
\noindent Decrypt($\SK$,$\CT$):$\\$
On input the secret key $\SK=(T_\bf{a},T_\bf{b})$ and a ciphertext $\CT=(\sigma,\bf{v},\CT_1,\CT_2,\CT_3,\CT_4)$, do the following:
\begin{enumerate}
	\item Compute ${m}=H_2(\CT_1\|\CT_2\|\CT_3\|\CT_4)$ and check whether $\bf{a'}^T\cdot\sigma = \bf{v}\cdot[{m},1]$. If not then output $\perp$, otherwise continue.
	\item Compute $h=H_1(\bf{v})$ and construct $\bf{a}_h$ and $\bf{b}_h$ as in Step 3-4 in the Encryption process.
	\item Sample short vectors $\bf{x},\bf{x}'\in R_q^m$:
	\begin{align*}
		\bf{x} &\gets\SamplePre(T_\bf{a},\bf{a}_h,h,\zeta,\sigma,\alpha,u) \\
		\bf{x}' &\gets\SamplePre(T_\bf{b},\bf{b}_h,h,\zeta,\sigma,\alpha,u)
	\end{align*}
	\item Compute $\bf{w}=\CT_1-\CT_3^T\bf{x}\in R_q$.
	\item For each $w_i$, if it is closer to $\lfloor q/2\rfloor$ than to $0$, then output $M_i=1$, otherwise $M_i=0$. Then we obtain the message $M$.
	\item Compute $\bf{w}'=\CT_2-\CT_4^T\bf{x}'\in R_q$.
	\item For each $w'_i$, if it is closer to $\lfloor q/2\rfloor$ than to $0$, then output $\bf{h}_i=1$, otherwise $\bf{h}_i=0$. Then we obtain an element $\bf{h}$.
	\item If $\bf{h} = H(M)$ then output $M$, otherwise output $\perp$.
\end{enumerate}

Let $U_i$ and $U_j$ be two users of the system. We denote by $\CT_i = (\sigma_i,\bf{v}_i,\CT_{i,1},\CT_{i,2},\CT_{i,3},\CT_{i,4})$ (resp. $\CT_j = (\sigma_j,\bf{v}_j,\CT_{j,1},\CT_{j,2},\CT_{j,3},\CT_{j,4})$) be a ciphertext of $U_i$ (resp. $U_j$).

\medskip
\noindent Trapdoor($\SK_i$): On input a user $U_i$'s secret key $\SK_i = (T_{i,\bf{a}}, T_{i,\bf{b}})$, it outputs a trapdoor $\td_{1,i}=T_{i,\bf{b}}$.

\medskip
\noindent Test($\td_{1,i},\td_{1,j},\CT_i,\CT_j$):
On input trapdoors $\td_{1,i},\td_{1,j}$ and ciphertexts $\CT_i,\CT_j$ for users $U_i, U_j$ respectively, computes
\begin{enumerate}
	\item For each $i$ (resp. $j$), do the following
	\begin{enumerate}
		\item Compute $h_i=H(v_i)$ and sample $\bf{x'}_i\in R_q^m$ from
		$$\bf{x'}_i\gets\SamplePre(T_{i,\bf{b}},\bf{b}_h,h_i,\zeta,\sigma,\alpha,u).$$
		\item Compute $\bf{w}_i=\CT_{i,2}-\CT_{i,4}^T\bf{x'}_i$. For each $k=1,\cdots,n$, if $w_{i,k}$ is closer to  $\lfloor q/2\rfloor$ than to $0$, then output $\bf{h}_{ik}=1$, otherwise $\bf{h}_{ik}=0$. Then we obtain the element $\bf{h}_i$ (resp. $\bf{h}_j$).	
	\end{enumerate}
	\item Output $1$ if $\bf{h}_i=\bf{h}_j$, and $0$ otherwise.
\end{enumerate}

\begin{lem}[Correctness]
	With the choice of parameters as in~\ref{sec:Params-ideal}, our proposed PKEET  is correct,  assuming that the hash function $H$ is collision-resitant.
\end{lem}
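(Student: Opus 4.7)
The plan is to verify the three correctness conditions of Definition~\ref{def:PKEET} in turn, reducing each one to a noise-growth calculation that must stay below the standard $q/4$ decryption threshold, together with an algebraic identity for the signature check and an appeal to collision resistance for the test.

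First, for condition (1), I would check that an honestly generated ciphertext passes the signature-style verification step and then decrypts correctly. The verification equation $\bf{a'}^T\sigma=\bf{v}\cdot[m,1]^T$ is immediate from $\bf{v}=\bf{a'}^T\bf{k}$ and $\sigma=\bf{k}[m,1]^T$, so that branch never triggers $\perp$. For the message recovery, use that $\SamplePre$ (Algorithm~\ref{algo:SamplePre}) returns $\bf{x}$ with $\bf{a}_h^T\bf{x}=u$ and $\|\bf{x}\|$ Gaussian with parameter $\zeta$; then
\[
\bf{w}=\CT_1-\CT_3^T\bf{x}=e'_1+M\cdot\lfloor q/2\rfloor-(\bf{y}^T,\bf{z}^T)\bf{x},
\]
and similarly $\bf{w}'=e'_2+H(M)\cdot\lfloor q/2\rfloor-((\bf{y}')^T,(\bf{z}')^T)\bf{x}'$. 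The rounding step recovers $M$ (resp.\ $H(M)$) provided the coefficient-wise $\ell_\infty$ norm of the error terms is smaller than $q/4$. This is where the parameter conditions chosen in Section~\ref{sec:Params-ideal} enter: one bounds $\|e'_1\|_\infty\leq\tau\sqrt{n}\,\omega(\sqrt{\log n})$ and the inner-product term by a standard sub-Gaussian tail bound in terms of $\tau$, $\gamma$, $\zeta$, and the ring dimension $n$, checking the resulting bound is below $q/4$ with overwhelming probability. Since both recoveries succeed, $\bf{h}=H(M)$ holds and step~8 returns $M$ rather than $\perp$.

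Next, for condition (2), observe that $\Td$ returns exactly the trapdoor $T_{i,\bf{b}}$ needed by $\Test$ to re-run the ``hash half'' of decryption on $(\CT_{i,2},\CT_{i,4})$ and $(\CT_{j,2},\CT_{j,4})$. By the same noise calculation as above applied to each user's ciphertext independently, $\Test$ recovers $\bf{h}_i=H(\bf{m}_i)$ and $\bf{h}_j=H(\bf{m}_j)$ with overwhelming probability; so $\bf{m}_i=\bf{m}_j$ forces $\bf{h}_i=\bf{h}_j$ and the test outputs~$1$. Condition (3) is the converse: if $\Dec(\SK_i,\CT_i)\neq\Dec(\SK_j,\CT_j)$, then either one of those decryptions already aborted (in which case the ciphertext is ill-formed and the noise calculation above does not apply; here one uses that $\Test$'s recovered $\bf{h}_i,\bf{h}_j$ are still well-defined bit strings and any collision between them would produce, via $H$, a collision in the hash), or both decryptions succeeded with distinct underlying messages, and then $\Test$ outputs $1$ only if $H(\bf{m}_i)=H(\bf{m}_j)$, which is negligible by the stated collision resistance of $H$.

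The main obstacle, and the only step that is not purely formal, is the noise bound in the first paragraph: one must verify that the parameters of Section~\ref{sec:Params-ideal} are large enough for $\zeta$ to satisfy the preconditions of $\SamplePre$ (so $\bf{x}$ is discrete Gaussian of parameter $\zeta$) yet small enough that $(\bf{y}^T,\bf{z}^T)\bf{x}+e'_1$ is bounded in $\ell_\infty$ by $q/4$ with overwhelming probability over both the Gaussian samples drawn during encryption and the trapdoor-preimage sampling. This is a routine application of standard tail bounds for discrete Gaussians over $R$ combined with the singular-value bound $s_1(T_{\bf{a}}),s_1(T_{\bf{b}})\leq O(\sigma\sqrt{(m-k)n})$ from Algorithm~\ref{algo:TrapGen}; I would carry it out coefficient-wise after multiplying out in $R_q$, using that the $\ell_\infty$ norm of a product in $R=\ZZ[x]/(x^n+1)$ is at most $n$ times the product of $\ell_\infty$ norms. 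Everything else in the argument is bookkeeping.
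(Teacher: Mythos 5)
Your plan is correct and follows essentially the same route as the paper's own (much terser) proof: both reduce decryption correctness to bounding the error term $e_1'-\bf{y}^T\bf{x}_0-\bf{z}^T\bf{x}_1$ by $\lfloor q/4\rfloor$ under the parameters of Section~\ref{sec:Params-ideal}, and both attribute test correctness to the collision resistance of $H$. Your additional verification of the signature-check identity and the case analysis for condition (3) are just more explicit bookkeeping on top of the same argument.
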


\begin{proof}
	Let $\bf{x}=(\bf{x}_0^T|\bf{x}_1^T)^T$ with $\bf{x}_0\in R_q^{m-k}$ and $\bf{x}_1\in R_q^k$. To correctly decrypt a ciphertext, we need the error term $e_1'-(\bf{y}^T|\bf{z}^T)(\bf{x}_0^T|\bf{x}_1^T)^T = e_1'-\bf{y}^T\bf{x}_0-\bf{z}^T\bf{x}_1$ to be bounded by $\lfloor q/4\rfloor$, which is satisfied by the choice of parameters in~Section~\ref{sec:Params-ideal}. Similarly, for the test procedure, one needs to correctly decrypt $H(M)$ and the equality test works correctly given that $H$ is collision-resistant.
\end{proof}

\subsection{Correctness and Parameters}\label{sec:Params-ideal}

We follow~{\cite[Section 4.2]{BertFRS18-implement}} and~{\cite[Section 4.2]{LM18}}\footnote{This choice of parameters ensures the strongly unforgeability of the signature scheme in Section~\ref{sec:one-time signature}; cf.~Theorem~\ref{sig strong RSIS}.} for choosing parameters for our scheme as the following.
\begin{enumerate}
	\item The modulus $q$ is chosen to be a prime and $q=n^{2.5}\log n$.
	\item We choose $m-k=2$ and $m=O(n\log n)$
	\item The Gaussian parameter $\sigma$ for the trapdoor sampling is $\sigma>\sqrt{(\ln(2n/\epsilon)/\pi)}$ (\cite{MP12}) where $n$ is the maximum length of the ring polynomials, and $\epsilon$ is the desired bound on the statistical error introduced by each randomized rounding operation. This parameter is also chosen to ensure the hardness of $\RLWE$ problem.
	\item The Gaussian parameter $\sigma$ for the $G$-sampling is $\alpha = \sqrt{5}\sigma$ (\cite{MP12}).
	\item The parameter $\zeta$ is chosen such that $\zeta>\sqrt{5}C\sigma^2(\sqrt{kn}+\sqrt{2n}+t')$ for $C\cong 1/\sqrt{2\pi}$ and $t'\geq 0$, following~\cite{GM18}.
	\item For decrypting correctly, we need
	$$
	t\tau\sqrt{n}+2t^2\tau\zeta n+t^2\gamma\zeta k n<\lfloor q/4\rfloor.
	$$
	\item Finally, we choose $\mu=t\sigma\tau\sqrt{2n}$ and $\gamma=2t\sigma\tau\sqrt{n}$ so that $\gamma$ satisfies $\gamma^2=(\sigma\|\bf{e}'\|)^2+\mu^2$.
	
	\item The parameter $t$ here is chosen such that a vector $\bf{x}$ sampled in $D_{\ZZ^m,\sigma}$ has norm $\|\bf{x}\|\leq t\sigma\sqrt{m}$. Note that
	$$\Pr_{x\hookleftarrow D_{\ZZ,\sigma}}[|x|>t\sigma]\leq\mathrm{erfc}(t/\sqrt{2})$$
	with $\mathrm{erfc}(x)=1-\frac{2}{\pi}\int_{0}^x\exp^{-t^2}dt$. One can choose, for example, $t=12$ (see {\cite[Section 2]{BertFRS18-implement}}).
\end{enumerate}

\subsection{Security Analysis}\label{sec: security ideal}
In this Section, we prove the proposed PKEET scheme achieves $\OWCCA$ and $\INDCCA$ security under the hardness of $\RLWE$ and $\RSIS$ problems.

\begin{thm}[$\OWCCA$]\label{thm:OWCCA Ideal}
	The proposed PKEET in Section~\ref{sec:construction ideal} with parameters as in Section~\ref{sec:Params-ideal} is $\OWCCA$ secure provided that $H_1$ is a collision-resistant hash function and the $\RLWE$ and $\RSIS$ problems are hard.
\end{thm}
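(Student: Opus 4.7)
The plan is to prove $\OWCCA$ security by a sequence of games that adapts the CHK transformation of~\cite{CHK04} and the ABB-style partitioning proof of~\cite{ABB10-EuroCrypt} to the ideal-lattice trapdoor framework of~\cite{MP12,GM18,BFRS18}. Let $\mathcal{A}$ be a Type-I adversary targeting user $U_\theta$. The chain of reductions I will build links $\mathcal{A}$'s winning probability to the hardness of $\RLWE$ (to hide the message $M$ in $\CT_1,\CT_3$), $\RSIS$ (for strong unforgeability of the embedded Lyubashevsky--Micciancio signature, Theorem~\ref{sig strong RSIS}), and collision-resistance of $H_1$.

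In the first game hop the challenger anticipates the challenge by sampling a fresh one-time signature key pair $(\bf{k}^*,\bf{v}^*)$ \emph{before} running Setup, computing the challenge tag $h^*=H_1(\bf{v}^*)\in R_q$, and invoking $\TrapGen(q,\sigma,\bf{a}',-h^*)$ so that $\bf{a}$ has the $\bf{g}$-trapdoor $T_\bf{a}$ associated to tag $-h^*$. Then the vector $\bf{a}_h$ used during encryption and decryption has effective tag $h-h^*$. By the distribution guarantee of $\TrapGen$ (Algorithm~\ref{algo:TrapGen}), this change is statistically undetectable. The pair $(\bf{b},T_\bf{b})$ is generated honestly with tag $0$, since the adversary is allowed to query $\mathcal{O}^\Td$ on $\theta$ and receive $T_\bf{b}=\td_\theta$.

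Next I rule out two standard CHK-style bad events. If $\mathcal{A}$ submits a decryption query whose signature component passes verification under $\bf{v}^*$ but whose ciphertext differs from the challenge, then the pair $(m,\sigma)$ with $m=H_2(\CT_1\|\CT_2\|\CT_3\|\CT_4)$ is a fresh valid signature, contradicting Theorem~\ref{sig strong RSIS}. If $\mathcal{A}$ submits a query with $\bf{v}\neq\bf{v}^*$ but $H_1(\bf{v})=H_1(\bf{v}^*)$, this directly breaks collision-resistance of $H_1$. Conditioned on these events not occurring, every decryption query uses a tag $h$ with $h-h^*$ nonzero, hence invertible in $R_q$ by the FRD property of $H_1$. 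The challenger therefore runs $\SamplePre(T_\bf{a},\bf{a}_h,h-h^*,\zeta,\sigma,\alpha,u)$ to obtain a short preimage $\bf{x}$, recovers $M$ exactly as in the honest $\Dec$, and performs the $H(M)$-consistency test, so the $\mathcal{O}^\Dec$ oracle is simulated without using the original honestly-generated trapdoor on tag $0$.

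For the challenge ciphertext, observe that with the programmed setup $\bf{a}_{h^*}=((\bf{a}')^T\mid -(\bf{a}')^T T_\bf{a})^T$, so the triple $(u,\CT_1,\CT_3)$ is, up to a fixed invertible linear transformation applied to the noise coordinates, the image of an $\RLWE$ sample with secret $s_1$ and public vector $(u,(\bf{a}')^T)$. Applying Theorem~\ref{thm:RLWE} replaces $(u,\CT_1,\CT_3)$ by a uniform tuple, after which $M$ is information-theoretically absent from $(\CT_1,\CT_3)$. The remaining components $(\CT_2,\CT_4)$ still leak $H(M)$ (the adversary may compute it with $T_\bf{b}$), but by the message-space assumption stated in the Remark following the $\OWCCA$ definition, together with one-wayness of $H$, the probability that $\mathcal{A}$ outputs $M$ is negligible. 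The main obstacle will be this $\RLWE$ step: I must verify that the linear change-of-variables used to rewrite $(\CT_1,\CT_3)$ as an $\RLWE$ instance sends the Gaussian noise vectors $(e_1',\bf{y},\bf{z})$ drawn with parameters $(\tau,\tau,\gamma)$ to the correct spherical distribution required by Theorem~\ref{thm:RLWE}, which in turn drives the parameter choices in Section~\ref{sec:Params-ideal}; the CHK-style game hops and the simulation of $\mathcal{O}^\Dec$, $\mathcal{O}^\Td$, $\mathcal{O}^\SK$ using the programmed trapdoor are then routine.
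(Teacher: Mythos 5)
Your proposal follows essentially the same route as the paper's proof: program $\bf{a}$ via $\TrapGen(q,\sigma,\bf{a}',-h^*_\theta)$ with $h^*_\theta=H_1(\bf{v}^*)$ so that decryption queries carry an invertible tag $h-h^*_\theta$, dispose of the bad events by strong unforgeability of the Lyubashevsky--Micciancio one-time signature (under $\RSIS$) and collision-resistance of $H_1$, and then switch the challenge ciphertext to uniform by a reduction from $\RLWE$. The noise-distribution issue you flag as the main obstacle is resolved in the paper exactly as you anticipate, by choosing $\gamma$ and $\mu$ so that $\bf{z}=-\bf{e}'^TT_{\bf{a}}+\widehat{\bf{e}}^T$ is statistically close to $D_{R^k,\gamma}$ with $\gamma^2=(\sigma\|\bf{e}'\|)^2+\mu^2$.
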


\begin{proof}
	The proof is proceeded through a sequence of games between an adversary $\cal{A}$ and an $\OWCCA$ challenger starting from \textbf{Game 0}, the original $\OWCCA$ game in Section~\ref{sec:security model}. In the last game, \textbf{Game 2}, the challenge ciphertext is chosen uniformly at random. Hence, the advantage of the adversary in this game is zero. We will show that the following games are indistinguishable, conditioned on the one-wayness of the hash function $H$ and the hardness of the $\RLWE$ and $\RSIS$ problems. which then will imply that the advantage of the adversary against the system is negligible.
	
	\medskip
	\noindent \textbf{Game 0:} This is the original $\INDCCA$ game.
	
	\medskip
	\noindent \textbf{Game 1:} In this game, we change the way how the challanger generates the public key for the user with index $\theta$ and the challange ciphertext $\CT^*_{\theta}$. At the start of the experiment, choose random $\bf{k}^*,\bf{a}'\in R_q^{m-k}$ and let  the public parameter $\bf{a}$ generated by $\TrapGen(q,\sigma,\bf{a}',-h^*_{\theta})$ where $h^*_{\theta}=H_1(\bf{v}^*)$ with $\bf{v}^*=\bf{a'}^T\cdot\bf{k}$. Hence, the public parameter is $\bf{a}=((\bf{a}')^T | -h^*_{\theta}\bf{g}-(\bf{a}')^TT_\bf{a})^T$, where the first part $\bf{a}'\in R_q^{m-k}$ is chosen from the uniform distribution. For the second part $\bf{a}'^TT_\bf{a}=(\sum_{i=1}^{m-k}a_it_{i,1},\cdots,\sum_{i=1}^{m-k}a_it_{i,k})$ is indistinguishable from the uniform distribution.
	In our paper, we choose $m-k=2$ and $\bf{a}'=(1,a)$ with $a\hookleftarrow U(R_q)$ and the public key $\bf{a}=(1,a|-(at_{2,1}+t_{1,1}),\cdots,-(at_{2,k}+t_{1,k}))$ looks uniform followed by the $\RLWE$ assumption, given that the secret and error follow the same distribution. The remainder of the game is unchanged and similar to \textbf{Game 0}.
	
	Note that whenever $\cal{A}$ queries $\cal{O}^{\Dec}(\t,\CT_\t)$ with $\CT_\t=(\sigma,\bf{v},\CT_{\t,1},\CT_{\t,2},$ $\CT_{\t,3},\CT_{\t,4})$ then $\cal{B}$ does as follows.
	\begin{itemize}
		\item If $\bf{v}=0$ then $\cal{B}$ aborts. It happens with negligible probability.
		\item If $(\bf{v},\CT_{\t,1},\CT_{\t,2},$ $\CT_{\t,3},\CT_{\t,4})=(\bf{v}^*,\CT^*_{\t,1},\CT^*_{\t,2},$ $\CT^*_{\t,3},\CT^*_{\t,4})$ and $\sigma\ne\sigma^*$, then we can use $\cal{A}$ to break the strongly unforgeable one-time signature in Section~\ref{sec:one-time signature}, which is impossible by the hardness of $\RSIS$ problem.
		\item If $(\CT_{\t,1},\CT_{\t,2},$ $\CT_{\t,3},\CT_{\t,4})=(\CT^*_{\t,1},\CT^*_{\t,2},$ $\CT^*_{\t,3},\CT^*_{\t,4})$ but $\bf{v}\ne\bf{v}^*$, then this implies that $H_1(\bf{v})=h^*_\t=H_1(\bf{v}^*)$. Hence we break the collision-resistance of the hash function $H_1$.
		\item Otherwise, $\cal{B}$ can answer as usual using the trapdoor $T_\bf{a}$, except if $H_1(\bf{v})=h^*_\t$, which happens with probability at most the advantage $\epsilon_{H_1,\mathsf{CR}}$ of breaking the collision-resistance of $H_1$.
	\end{itemize}
	It follows that \textbf{Game 1} and \textbf{Game 0} are indistinguishable.

	\medskip
	\noindent \textbf{Game 2:} In this game, we change how the challenge ciphertext is generated. The ciphertext $\CT^*_{\theta}$ is now chosen uniformly in $R_q^{3m+4-k}$. It is obvious that the advantage of the adversary $\cal{A}$ in this game is zero.

	We now show that \textbf{Game 2} and \textbf{Game 1} are indistinguishable for $\cal{A}$ by doing a reduction from $\RLWE$ problem. 
	
	Now $\cal{B}$ receives $m-k+1$ samples $(a_i,b_i)_{0\leq i\leq m-k}$ as an instance of the decisional RLWE problem. Let $\bf{a}'=(a_1,\cdots,a_{m-k})^T\in R_q^{m-k}$ and $\bf{b}'=(b_1,\cdots,b_{m-k})^T\in R_q^{m-k}$. The simulator runs $\TrapGen(q,\sigma,\bf{a}',-h^*_\theta)$, and we get $\bf{a}=((\bf{a}')^T | -h^*_{\theta}\bf{g}-(\bf{a}')^TT_{\bf{a}})^T$ as in \textbf{Game 1}. 
Construct $\bf{b}$ as in \textbf{Game 1}. Next $\cal{B}$ set $u=a_0$ and sends $\PK_\theta=(\bf{a},\bf{b},u)$ to $\cal{A}$ as the public key of the user $\theta$. 
	
	At the challenge phase, the simulator chooses a message $M$ and computes the challenge ciphertext $\CT^*_\theta\gets\Enc(\PK_\theta,M)$ as follows:
	
	\begin{enumerate}
		\item Set $\CT^*_{\theta,1}\gets b_0 + M\cdot\lfloor q/2\rfloor.$
		\item Choose a uniformly random $s_2\in R_q$ and $e_2'\hookleftarrow D_{R,\tau}$ and compute
		$$\CT^*_{\theta,2} = u\cdot s_2 + e_2' + H'(M)\cdot\lfloor q/2\rfloor\in R_q.$$

		\item Set 
		$$\CT^*_{\t,3}=\left[ 
		\begin{array}{c}
			\bf{b}' \\
			-\bf{b}'T_\bf{a} +\widehat{\bf{e}}
		\end{array}
		\right]\in R_q^m$$
		with $\widehat{\bf{e}}\hookleftarrow D_{R_q^k,\,u}$ for some real $\mu$.
		\item Choose $\bf{y}'\hookleftarrow D_{R^{m-k},\tau}$, $\bf{z}'\hookleftarrow R_{R^k,\gamma}$ and set
		$$\CT^*_{\t,4}=\bf{b}_{h_\t}\cdot s_2+((\bf{y}')^T,(\bf{z}')^T)^T\in R_q^m.$$
		\item Compute $m=H_2(\CT^*_{\t,1}\| \CT^*_{\t,2}\|\CT^*_{\t,3}\|\CT^*_{\t,4})$ and $\sigma^*=\bf{k}^*\cdot[m,1]^T$.
	\end{enumerate}		
	Then $\cal{B}$ sends $\CT^*_\t=(\sigma^*,\bf{v}^*,\CT^*_{\t,1},\CT^*_{\t,2},\CT^*_{\t,3},\CT^*_{\t,4})$ to $\cal{A}$.
	
	When the samples $(a_i,b_i)$ are LWE samples, then $\bf{b}'=\bf{a}'s_1+\bf{e}'$ and $b_0=a_0s_1+e_0$ for some $s_1\in R_q$ and $e_0\hookleftarrow D_{R,\tau}$, $\bf{e}'\hookleftarrow D_{R^{m-k},\tau}$. It implies that 
	$$
	\CT^*_{\t,1} = u\cdot s_1+e_0 + M\cdot\lfloor q/2\rfloor ~~~ and$$
	$$\CT^*_{\t,3} = \bf{a}_{h_\t}\cdot s_1 + (\bf{e}'^T | \bf{z}^T) $$
	where $\bf{z}=-\bf{e}'^TT_{\bf{a}}+\widehat{\bf{e}}^T$ is indistinguishable from a sample drawn from the distribution $D_{R^k,\gamma}$ with $\gamma^2=(\sigma\|\bf{e}'\|)^2+\mu^2$ for $\mu$ well chosen. 
	
	Then $\CT^*_{\t}$ is a valid ciphertext.
	
	When the $(a_i,b_i)$ are uniformly random in $R_q^2$, then obviously $\CT^*_{\t}$ also looks uniform.
	
	$\cal{A}$ guesses if it is interacting with \textbf{Game 2} or \textbf{Game 1}. The simulator outputs the final guess as the answer to the
	RLWE problem. It follows that the advantage of $\cal{A}$ is negligible due to the hardness of the  $\RLWE$ problem. This completes the proof.
\end{proof}

\begin{thm}[$\INDCCA$]\label{thm:INDCCA Ideal}
	The proposed PKEET in Section~\ref{sec:construction ideal} with parameters as in Section~\ref{sec:Params-ideal} is $\INDCCA$ secure provided that $H_1$ is a collision-resistant hash function, $H$ is a one-way hash function, and that the $\RLWE$ problem and the $\RSIS$ problem are hard.
\end{thm}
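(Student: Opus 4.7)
The plan is to extend the game-hopping argument of Theorem~\ref{thm:OWCCA Ideal} to the $\INDCCA$ setting by performing \emph{two} separate $\RLWE$ randomizations, one against each of the two ``columns'' $(\CT^*_{\theta,1},\CT^*_{\theta,3})$ and $(\CT^*_{\theta,2},\CT^*_{\theta,4})$ of the challenge ciphertext. Let $\cal{A}$ be a Type-II adversary attacking user $\theta$. Since $\cal{A}$ is denied both $\SK_\theta=(T_\bf{a},T_\bf{b})$ and the trapdoor $\td_\theta=T_\bf{b}$, we are free to mask the second column as well, which was impossible in the Type-I setting. We set up four games, starting from \textbf{Game 0} (the original $\INDCCA$ experiment) and culminating in \textbf{Game 3}, in which the challenge ciphertext carries no information about the hidden bit $b$.

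In \textbf{Game 1} the challenger pre-samples $\bf{k}^*\in R^{(m-k)\times 2}$, $\bf{a}'$ and $\bf{b}'$, defines $\bf{v}^*=(\bf{a}')^T\bf{k}^*$ and the singular tag $h^*_\theta=H_1(\bf{v}^*)$, and then generates $(\bf{a},T_\bf{a})\gets\TrapGen(q,\sigma,\bf{a}',-h^*_\theta)$ and $(\bf{b},T_\bf{b})\gets\TrapGen(q,\sigma,\bf{b}',-h^*_\theta)$. Indistinguishability from Game~0 follows from $\RLWE$ as in the previous proof. Decryption queries on user $\theta$ are handled exactly as before: a signature forgery on $\bf{v}^*$ reduces to $\RSIS$; producing $\bf{v}\ne\bf{v}^*$ with $H_1(\bf{v})=h^*_\theta$ is a collision of $H_1$; and any other query has an invertible tag $h-h^*_\theta$, so the simulator invokes $\SamplePre$ once with $T_\bf{a}$ and once with $T_\bf{b}$, completing both branches of decryption honestly, including the final check $\bf{h}=H(M)$.

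In \textbf{Game 2} the challenger replaces $(\CT^*_{\theta,1}-M_b\lfloor q/2\rfloor,\ \CT^*_{\theta,3})$ by uniformly random elements of $R_q\times R_q^m$ and recomputes the signature from $\bf{k}^*$. This hop is driven by the verbatim $\RLWE$ reduction of Theorem~\ref{thm:OWCCA Ideal}: an $\RLWE$ challenge with public vector $(a_0,\bf{a}')$ is threaded through $\TrapGen$, $u=a_0$, and the $b_i$ samples populate $\CT^*_{\theta,1}$ and $\CT^*_{\theta,3}$; the components $(\CT^*_{\theta,2},\CT^*_{\theta,4})$ are still produced honestly from a fresh $s_2$. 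In \textbf{Game 3} we repeat the symmetric hop for $(\CT^*_{\theta,2}-H(M_b)\lfloor q/2\rfloor,\ \CT^*_{\theta,4})$ using an independent $\RLWE$ instance, now embedding its samples into $\bf{b}$ via $\TrapGen(q,\sigma,\bf{b}',-h^*_\theta)$ while generating $\bf{a}$ freshly. Note that in each reduction only one of the $\TrapGen$ calls consumes $\RLWE$ samples, so the simulator always retains both $T_\bf{a}$ and $T_\bf{b}$ and answers every decryption query exactly as in Game~1. After Game~3 the challenge ciphertext is independent of $b$, giving $\Pr[b=b']=1/2$.

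The main technical obstacle is maintaining consistent decryption-oracle simulation across all hops while the public element $u$ and the two trapdoors get reshuffled: because Decrypt traverses both the $M$-branch and the $H(M)$-branch before accepting, the simulator cannot afford to lose either trapdoor, which forces us to split the two $\RLWE$ embeddings across the two $\TrapGen$ calls. The one-wayness of $H$ enters to preclude a residual attack in which $\cal{A}$ recycles $\CT^*_{\theta,2}$ inside a fresh ciphertext whose $\CT_1$-part encrypts a new $M'$ and relies on the final equality check $\bf{h}=H(M')$ to leak information about $M_b$; matching such an $\bf{h}$ amounts to inverting $H$. Summing the hops gives
\[
\Adv^{\INDCCA}_{\cal{A},\text{PKEET}}\ \le\ 2\cdot\Adv^{\RLWE}+\Adv^{\RSIS}+\Adv^{\mathsf{CR}}_{H_1}+\Adv^{\mathsf{OW}}_{H},
\]
each term being negligible by the stated assumptions.
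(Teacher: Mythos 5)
Your proposal is correct and is, in substance, the argument the paper intends but omits: the paper's ``proof'' of Theorem~\ref{thm:INDCCA Ideal} consists only of the statement that it proceeds by a game sequence similar to Theorem~\ref{thm:OWCCA Ideal}. What your write-up adds, and what the Type-II setting genuinely requires, is the second $\RLWE$ hop randomizing $(\CT^*_{\theta,2},\CT^*_{\theta,4})$: without it $\CT^*_{\theta,2}$ still carries $H(M_b)\lfloor q/2\rfloor$ and the two candidate messages remain distinguishable, and this hop is available precisely because the Type-II adversary is denied $\td_\theta=T_\bf{b}$. Your bookkeeping is sound --- generating both $\bf{a}$ and $\bf{b}$ with tag $-h^*_\theta$ so that only the challenge tag is non-invertible while every other decryption query can be answered by $\SamplePre$ on both branches, and splitting the two $\RLWE$ embeddings across the two $\TrapGen$ calls so the simulator never loses a trapdoor it needs. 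Two small remarks. First, the paper's closing comment on this theorem (``the adversary can have the trapdoor for the target user'') is a misplaced echo of the Type-I discussion; in the Type-II game no trapdoor for $\theta$ is issued, and your proof correctly shows that hiding $H(M_b)$ is the job of the second $\RLWE$ hop, with the one-wayness of $H$ playing at most the peripheral role you describe for recycled-component queries (which are in any case already blocked by the signature check and the tag binding of $\CT_4$ to $h^*_\theta$). Second, your final bound undercounts slightly: the \textbf{Game 0}--\textbf{Game 1} public-key switch is itself argued under $\RLWE$, so the $\RLWE$ term should appear three times rather than two; this affects only the constant and not the conclusion.
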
 

\begin{proof}
	The proof is proceeded through a sequence of games between an adversary $\cal{A}$ and an $\INDCCA$ challenger starting from the original $\INDCCA$ game in Section~\ref{sec:security model}.  It is then required to show that the games are indistinguishable, conditioned on the one-wayness of the hash function $H$, the collision resistance of the hash function $H_1$ and the hardness of the $\RLWE$ and $\RSIS$ problems. The proof is similar to that of Theorem~\ref{thm:OWCCA Ideal} and we will omit it here. 
	Note that in this scenario, the adversary $\cal{A}$ can have the trapdoor for the target user to do the equality test. In such case, the adversary can follow the testing procedure to obtain the hash value $H(m_i)$ of the challenge ciphertext. If he can break the one-wayness of $H$ then he can win the game.
\end{proof}


\section{Revisting PKEET over integer lattices from~\cite{Duong19} }\label{sec:Duong-revisit}
In this paper, we modify the construction of PKEET from~\cite{Duong19} to achieve CCA2-security. We apply the CHK transformation~\cite{CHK04} with the strongly one-time signature from~\cite{LM18} described in Section~\ref{sec:one-time signature}. Note that the version of PKEET in Section~\ref{sec:construction-ideal} over integer lattices is much more efficient, in terms of key size, than the one presented in this paper, which we modify directly on the scheme proposed by Duong et al.~\cite{Duong19}.
\subsection{Construction}

\medskip
\noindent Setup($\lambda$): On input a security parameter $\lambda$, set the parameters $q,n,m,\sigma,\alpha$ as in section \ref{sec:params}
\begin{enumerate}
	\item Use $\TrapGen(q,n)$ to generate uniformly random $n\times m$-matrices $A, A'\in\ZZ_q^{n\times m}$ together with trapdoors $T_{A}$ and $T_{A'}$ respectively.
	\item Select $l+1$ uniformly random $n\times m$ matrices $A_1,\cdots,A_l,B\in\ZZ_q^{n\times m}$.
	\item Let $H: \{0,1\}^*\to \{0,1\}^t$ and $H':\{0,1\}^*\to \{-1,1\}^l$ be hash functions.
	\item Select a uniformly random matrix $U\in\ZZ_q^{n\times t}$.
	\item Output the public key and the secret key
	$$\PK=(A,A',A_1,\cdots,A_l,B,U), \SK=(T_A,T_{A'}).$$
\end{enumerate} 

\medskip
\noindent Encrypt($\PK,\bf{m}$): On input the public key $\PK$ and a message $\bf{m}\in\{0,1\}^t$, do:
\begin{enumerate}
	\item Choose a uniformly random $\bf{s}_1, \bf{s}_2\in\ZZ_q^n$
	\item Choose $\bf{x}_1,\bf{x}_2\in\overline{\Psi}_\alpha^t$ and compute\footnote{Note that for a message $\bf{m}\in\{0,1\}^t$, we choose a random binary string $\bf{m}'$ of fixed length $t'$ large enough and by abusing of notation, we write $H(\bf{m})$ for $H(\bf{m}'\|\bf{m})$.}
	\begin{align*}
		\bf{c}_1 &= U^T\bf{s}_1 +\bf{x}_1 +\bf{m}\big\lfloor\frac{q}{2}\big\rfloor\in\ZZ_q^t,\\
		\bf{c}_2 &= U^T\bf{s}_2 +\bf{x}_2 +H(\bf{m})\big\lfloor\frac{q}{2}\big\rfloor \in\ZZ_q^t.
	\end{align*}
	
	\item Choose $K\in\ZZ_q^{m\times k}$ uniformly at random and compute $D=AK\in\ZZ_q^{n\times k}$.
	\item Compute $\bf{b}=H'(\bf{c}_1\|\bf{c}_2\|D)\in\{-1,1\}^l$, and set 
	$$F_1=(A|B+\sum_{i=1}^lb_iA_i), F_2=(A'|B+\sum_{i=1}^lb_iA_i).$$
	\item Choose $l$ uniformly random matrices $R_i\in\{-1,1\}^{m\times m}$ for $i=1,\cdots,l$ and define $R=\sum_{i=1}^lb_iR_i\in\{-l,\cdots,l\}^{m\times m}$.
	\item Choose $\bf{y}_1, \bf{y}_2\in\overline{\Psi}_\alpha^m$ and set $\bf{z}_1=R^T\bf{y}_1, \bf{z}_2=R^T\bf{y}_2\in\ZZ_q^m$.
	\item Compute
	$$\bf{c}_3=F_1^T\bf{s}_1+[\bf{y}_1^T|\bf{z}_1^T]^T, \bf{c}_4=F_2^T\bf{s}_2+[\bf{y}_2^T|\bf{z}_2^T]^T\in\ZZ_q^{2m}.$$
	
	\item Compute $\bf{d}=H'(\bf{c}_1\|\bf{c}_2\|\bf{c}_3\|\bf{c}_4)\in\{0,1\}^k$ such that $\|\bf{d}\|_1\leq w$ and compute $\bf{u}=K\bf{d}\in\ZZ_q^{m}$.
	\item The ciphertext is $$\CT=(\bf{c}_1,\bf{c}_2,\bf{c}_3,\bf{c}_4,\bf{u},D)\in\ZZ_q^{2t+5m}\times\ZZ_q^{n\times k}.$$    
\end{enumerate}

\medskip
\noindent Decrypt($\PK,\SK,\CT$): On input public key $\PK$, private key $\SK$ and a ciphertext $\CT=(\bf{c}_1,\bf{c}_2,\bf{c}_3,\bf{c}_4,\bf{u},D)$, do:
\begin{enumerate}
	\item Compute $\bf{d}= H'(\bf{c}_1\|\bf{c}_2\|\bf{c}_3\|\bf{c}_4)\in\{0,1\}^l$ and check whether $\|\bf{d}\|_1\leq w$. If not then return $\perp$, otherwise continue to Step 2.
	\item Check whether $A\bf{u}=D\bf{d}\mod q$. If not then returns $\perp$;  Otherwise, continue to Step 3.
	\item Compute $\bf{b}=H'(\bf{c}_1\|\bf{c}_2\|D)\in\{-1,1\}^l$ and sample $\bf{e}\in\ZZ^{2m\times t}$ from $$\bf{e}\gets\SampleLeft(A, B+\sum_{i=1}^lb_iA_i,T_A,U,\sigma).$$
	Note that $F_1\cdot\bf{e}=U$ in $\ZZ^{n\times t}_q$.
	\item Compute $\bf{w}\gets\bf{c}_1-\bf{e}^T\bf{c}_3\in\ZZ_q^t$.
	\item For each $i=1,\cdots, t$, compare $w_i$ and $\lfloor\frac{q}{2}\rfloor$. If they are close, output $m_i=1$ and otherwise output $m_i=0$. We then obtain the message $\bf{m}$.
	\item Sample $\bf{e}'\in\ZZ^{2m\times t}$ from
	$$\bf{e}'\gets\SampleLeft(A', B+\sum_{i=1}^lb_iA_i,T_{A'},U,\sigma).$$
	\item Compute $\bf{w}'\gets\bf{c}_2-(\bf{e}')^T\bf{c}_4\in\ZZ_q^t$.
	\item For each $i=1,\cdots,t$, compare $w'_i$ and $\lfloor\frac{q}{2}\rfloor$. If they are close, output $h_i=1$ and otherwise output $h_i=0$. We then obtain the vector $\bf{h}$.
	\item If $\bf{h}=H(\bf{m})$ then output $\bf{m}$, otherwise output $\perp$.
\end{enumerate}
\medskip
\noindent Trapdoor($\SK_i$): On input a user $U_i$'s secret key $\SK_i=(K_{i,1}, K_{i,2})$, it outputs a trapdoor $\td_i=K_{i,2}$.

\medskip
\noindent Test($\td_i,\td_j,\CT_i,\CT_j$): On input trapdoors $\td_i, \td_j$ and ciphertexts $\CT_i,\CT_j$ for users $U_i, U_j$ respectively, computes 
\begin{enumerate}
	\item For each $i$ (resp. $j$), do the following:
	\begin{itemize}
		\item Compute $$\bf{b}_i=H'(\bf{c}_{i1}\|\bf{c}_{i2}\|D_i) =(b_{i1},\cdots,b_{il})$$ and sample $\bf{e}_i\in\ZZ^{2m\times t}$ from
		$$\bf{e_i}\gets\SampleLeft(A'_i, B_i+\sum_{k=1}^lb_{ik}A_{ik},T_{A'_i},U_i,\sigma).$$
		Note that $F_{i2}\cdot\bf{e}_i=U_i$ in $\ZZ^{n\times t}_q$.
		\item Compute $\bf{w}_i\gets \bf{c_{i2}}-\bf{e}_i^T\bf{c}_{i4}\in\ZZ_q^t$. For each $k=1,\cdots, t$, compare each coordinate $w_{ik}$ with $\lfloor\frac{q}{w}\rfloor$ and output $\bf{h}_{ik}=1$ if they are close, and $0$ otherwise. At the end, we obtain the vector $\bf{h}_i$ (resp. $\bf{h}_j$). 
	\end{itemize}
	\item Output $1$ if  $\bf{h}_i=\bf{h}_j$ and $0$ otherwise.
\end{enumerate}

\begin{thm}
	Our PKEET construction above is correct if $H$  is a collision-resistant hash function.
\end{thm}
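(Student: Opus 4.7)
The plan is to verify the three correctness conditions in the PKEET definition in order. First I would show that an honestly generated ciphertext is never rejected by the preliminary checks in $\Decrypt$: since $\bf{d}$ is produced by $H'$, we may assume by construction of the encryption that $\|\bf{d}\|_1 \leq w$ (if $H'$ is modelled so that its outputs always satisfy this, the check is trivial; otherwise the encryptor re-samples until it does), and the identity $A\bf{u} = AK\bf{d} = D\bf{d}$ holds by the very definition of $\bf{u}$ and $D$, so both of the first two checks pass with probability $1$.

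Next I would carry out the standard LWE-style error analysis for the first decryption step. Writing $\bf{e} = [\bf{e}_a^T\,|\,\bf{e}_b^T]^T$, the $\SampleLeft$ guarantee from Theorem~\ref{thm:Gauss} gives $F_1\bf{e} = U$, so
\begin{align*}
\bf{w} \;=\; \bf{c}_1 - \bf{e}^T\bf{c}_3
\;=\; \bf{x}_1 + \bf{m}\lfloor q/2\rfloor - \bf{e}_a^T\bf{y}_1 - \bf{e}_b^T\bf{z}_1,
\end{align*}
with $\bf{z}_1 = R^T\bf{y}_1$. The point is then to bound the error term $\|\bf{x}_1 - \bf{e}_a^T\bf{y}_1 - \bf{e}_b^TR^T\bf{y}_1\|_\infty$ by $q/4$. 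Using the Gaussian tail bounds on $\bf{x}_1, \bf{y}_1 \in \overline{\Psi}_\alpha$, the size bounds on $\bf{e}$ guaranteed by $\SampleLeft$ at parameter $\sigma$, and the bound $s_R \leq O(l\sqrt{m})$ on the norm of $R = \sum_i b_i R_i$ (a sum of $l$ sign matrices), this reduces to a system of inequalities on $q, m, \sigma, \alpha, l$ which is precisely what the parameter choices in Section~\ref{sec:params} are designed to satisfy; the argument is identical in structure to the standard IBE decryption bound of~\cite{ABB10-EuroCrypt}. Each coordinate of $\bf{w}$ is therefore within $q/4$ of $m_i\lfloor q/2\rfloor$, and the rounding step recovers $\bf{m}$ exactly. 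Repeating the same computation with $A'$, $T_{A'}$, $\bf{c}_2$ and $\bf{c}_4$ yields $\bf{h} = H(\bf{m})$, so the final check $\bf{h} = H(\bf{m})$ succeeds and $\bf{m}$ is output. This handles condition (1).

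For condition (2), suppose $\bf{m}_i = \bf{m}_j$. Running the two sub-decryptions inside $\Test$ on the second halves of the two ciphertexts is exactly the second LWE decryption of the main scheme, so by the same error analysis as above each user's branch recovers the vector $H(\bf{m}_i) = H(\bf{m}_j)$, and $\Test$ outputs $1$. For condition (3), suppose $\bf{m}_i^{\ast} := \Dec(\SK_i, \CT_i) \neq \bf{m}_j^{\ast} := \Dec(\SK_j, \CT_j)$. The $\Test$ procedure, by the same analysis, extracts vectors $\bf{h}_i, \bf{h}_j$ that agree with $H(\bf{m}_i^{\ast}), H(\bf{m}_j^{\ast})$ whenever the underlying ciphertexts are well-formed; consequently $\bf{h}_i = \bf{h}_j$ forces $H(\bf{m}_i^{\ast}) = H(\bf{m}_j^{\ast})$ with $\bf{m}_i^{\ast} \neq \bf{m}_j^{\ast}$, a collision of $H$. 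Since $H$ is collision-resistant this happens only with negligible probability, completing condition (3).

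The main obstacle is the error bound in the second paragraph: one must simultaneously control the sub-Gaussian tail of $\bf{x}_1, \bf{y}_1$, the norm of the $\SampleLeft$ output (which depends on the Gram--Schmidt quality $\|\widetilde{T_A}\|$ from $\TrapGen$), and the amplification factor introduced by $R$, and then verify that Section~\ref{sec:params} indeed picks $q$ large enough relative to all these quantities; the remaining steps are purely structural and follow directly from the construction and the collision-resistance assumption on $H$.
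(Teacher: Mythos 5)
Your proposal is correct and follows the same route as the paper's own proof: honest decryption recovers $\bf{m}$ (and $H(\bf{m})$), and the $\Test$ outcome reduces to comparing $H(\bf{m}_i)$ with $H(\bf{m}_j)$, so collision resistance of $H$ gives conditions (2) and (3). The paper simply asserts the decryption correctness as "easy to see," whereas you spell out the signature-check identity $A\bf{u}=AK\bf{d}=D\bf{d}$ and the LWE error bound inherited from the parameter choices of~\cite{ABB10-EuroCrypt}; this is additional detail, not a different argument.
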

\begin{proof}
	It is easy to see that if $\CT$ is a valid ciphertext of $\bf{m}$ then the decryption will always output $\bf{m}$. Moreover, if $\CT_i$ and $\CT_j$ are valid ciphertext of $\bf{m}$ and $\bf{m}'$ of user $U_i$ and $U_j$ respectively. Then the Test process checks whether $H(\bf{m})=H(\bf{m}')$. If so then it outputs $1$, meaning that $\bf{m}=\bf{m}'$, which is always correct with overwhelming probability since $H$ is collision resistant. Hence our PKEET described above is correct.
\end{proof}

\subsection{Parameters}\label{sec:params}
We follow~{\cite[Section 7.3]{ABB10-EuroCrypt}} and~{\cite[Section 4.1]{LM18}} for choosing parameters for our scheme. Now for the system to work correctly we need to ensure
\begin{itemize}
	\item the error term in decryption is less than ${q}/{5}$ with high probability, i.e., $q=\Omega(\sigma m^{3/2})$ and $\alpha<[\sigma lm\omega(\sqrt{\log m})]^{-1}$,
	\item that the $\TrapGen$ can operate, i.e., $m>6n\log q$,
	\item that $\sigma$ is large enough for $\SampleLeft$ and $\SampleRight$, i.e.,
	$\sigma>lm\omega(\sqrt{\log m})$,
	\item that Regev's reduction applies, i.e., $q>2\sqrt{n}/\alpha$,
	\item that our security reduction applies (i.e., $q>2Q$ where $Q$ is the number of identity queries from the adversary).
\end{itemize}
Hence the following choice of parameters $(q,m,\sigma,\alpha)$ from \cite{ABB10-EuroCrypt} satisfies all of the above conditions, taking $n$ to be the security parameter:
\begin{equation}\label{eq:params}
	\begin{aligned}
		& m=6n^{1+\delta}\quad,\quad q=\max(2Q,m^{2.5}\omega(\sqrt{\log n})) \\
		& \sigma = ml\omega(\sqrt{\log n})\quad,\quad\alpha=[l^2m^2\omega(\sqrt{\log n})]\\
		&b=1, \quad,\quad w =O(n/\log(n)) 
	\end{aligned}
\end{equation}
and round up $m$ to the nearest larger integer and $q$ to the nearest larger prime. Here we assume that $\delta$ is such that $n^\delta>\lceil\log q\rceil=O(\log n)$. This choice of parameters~\eqref{eq:params} will also ensure the security of the one-time signature scheme in Section~\ref{sec:one-time signature}; cf.~Theorem~\ref{sig strong SIS}.
\subsection{Security analysis}
The proposed scheme is OW-CCA2 secure against Type-I adversaries (cf.~Theorem~\ref{thm:OW}) and IND-CCA2 secure against Type-II adversaries (cf. Theorem~\ref{thm:IND}). The proofs are similar to those in~\cite{Duong19} and those in Section~\ref{sec:construction-ideal} and hence we omit the full proofs. Note that in order to answer decryption queries and achieve CCA2-security, we simply follow the standard procedure as in~\cite{CHK04}.

\begin{thm}\label{thm:OW}
	The PKEET with parameters $(q,n,m,\sigma,\alpha)$ as in~\eqref{eq:params} is $\OWCCA$ secure provided that $H$ is a one-way hash function, $H'$ is a collision-resistant hash function.
	
\end{thm}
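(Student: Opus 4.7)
The plan is to follow the sequence-of-games structure used in the proof of Theorem~\ref{thm:OWCCA Ideal}, adapted to the integer lattice setting and the ABB-style partitioning on the tag $\bf{b}^*=H'(\bf{c}_1^*\|\bf{c}_2^*\|D^*)$ attached to the challenge ciphertext. Let $\cal{A}$ be a Type-I adversary targeting user $U_\theta$; I would proceed through Game~0 (the original $\OWCCA$ game), Game~1 (simulated public key punctured at $\bf{b}^*$), and Game~2 (uniformly random challenge ciphertext). In Game~2 the adversary has zero advantage, so it suffices to bound the gap between consecutive games.

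In Game~1 I would compute $\bf{b}^*$ at the very start of the experiment by first sampling $K^*,D^*$ honestly, choosing $(\bf{c}_1^*,\bf{c}_2^*)$ ahead of time as uniform, and setting $\bf{b}^*=H'(\bf{c}_1^*\|\bf{c}_2^*\|D^*)\in\{-1,1\}^l$. The simulator generates $A,A'$ via $\TrapGen$ as in Game~0 (keeping $T_A,T_{A'}$), samples $R_1^*,\dots,R_l^*\in\{-1,1\}^{m\times m}$ and a fresh $B_0\in\ZZ_q^{n\times m}$, and programs
\[
A_i := A R_i^* + b_i^* B_0,\qquad B := A R^*-\sum_{i=1}^l b_i^* A_i,
\]
so that $B+\sum_i b_i^* A_i = A R^*$ for $R^*=\sum_i b_i^*R_i^*$. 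By the leftover hash lemma (\cite[Lemma~15]{ABB10-EuroCrypt}) the distribution of $(A,A_1,\dots,A_l,B)$ is statistically close to uniform, so Game~1 is indistinguishable from Game~0. Decryption queries $(\theta,\CT=(\bf{c}_1,\bf{c}_2,\bf{c}_3,\bf{c}_4,\bf{u},D))$ are answered as follows: (i) if $(\bf{c}_1,\bf{c}_2,D)=(\bf{c}_1^*,\bf{c}_2^*,D^*)$ but $(\bf{c}_3,\bf{c}_4,\bf{u})\neq(\bf{c}_3^*,\bf{c}_4^*,\bf{u}^*)$ then $\cal{A}$ has produced a fresh signature $\bf{u}$ on a new message $\bf{d}=H'(\bf{c}_1\|\bf{c}_2\|\bf{c}_3\|\bf{c}_4)$ under key $(A,D)$, contradicting the strong unforgeability of the one-time signature (hence $\text{SIS}^\infty$); (ii) if the computed tag $\bf{b}=H'(\bf{c}_1\|\bf{c}_2\|D)$ equals $\bf{b}^*$ but $(\bf{c}_1,\bf{c}_2,D)\neq(\bf{c}_1^*,\bf{c}_2^*,D^*)$, this is a collision for $H'$; (iii) otherwise $\bf{b}\neq\bf{b}^*$, so $B+\sum_i b_iA_i = A(\sum_i(b_i-b_i^*)R_i^*)+(b_i-b_i^*)B_0$ has the form $AR+h B_0$ for an invertible scalar difference, allowing the simulator to run $\SampleLeft$ with $T_A$ (equivalently $\SampleRight$ via $B_0$) to obtain a decryption key and decrypt normally.

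The transition from Game~1 to Game~2 is the main reduction. Here I would embed an $\LWE$ instance $\{(\bf{a}_j,v_j)\}$ with secret $\bf{s}_1$ to simulate $U$ and $\bf{c}_1^*, \bf{c}_3^*$ simultaneously: the columns of $U$ come from the uniform $\bf{a}_j$-part and the corresponding $v_j$ yield $\bf{c}_1^*=U^T\bf{s}_1+\bf{x}_1+\bf{m}\lfloor q/2\rfloor$; additional $\LWE$ samples produce $\bf{y}_1$ so that $\bf{c}_3^*=[A\;|\;AR^*]^T\bf{s}_1+[\bf{y}_1^T\,|\,(R^*)^T\bf{y}_1^T]^T$ (noting that the right-hand block equals $F_1^T\bf{s}_1$ because $B+\sum_i b_i^*A_i=AR^*$ in Game~1). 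The value $\bf{c}_2^*$ is generated honestly using the simulator's knowledge of $U$, while $\bf{c}_4^*$ is formed in the same way from independent LWE samples (or simulated honestly since the simulator only needs $T_{A'}$ for Type-I decryption). If the LWE oracle is pseudorandom, the simulation matches Game~1; if it is uniform, $(\bf{c}_1^*,\bf{c}_3^*)$ (and analogously $(\bf{c}_2^*,\bf{c}_4^*)$) are uniformly random, matching Game~2.

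The main obstacle, as in the ideal case, is ensuring that decryption queries in Games~1--2 remain answerable without the target trapdoor while the adversary holds $\td_\theta$: Type-I adversaries can test arbitrary ciphertexts against $\CT^*_\theta$, so $\cal{A}$'s only path to recovering $\bf{m}$ from $\CT^*_\theta$ is inverting $H$ on the recovered hash component, and hence winning contradicts one-wayness of $H$. Combining the three hops yields the $\OWCCA$ advantage bound
\[
\Adv^{\OW}_{\cal{A},\text{PKEET}}\le \Adv^{\mathsf{CR}}_{H'}+\Adv^{\mathsf{OT\text{-}sEUF}}_{\text{Sig}}+\Adv^{\LWE}+\Adv^{\mathsf{OW}}_{H},
\]
each of which is negligible under the stated assumptions, completing the proof.
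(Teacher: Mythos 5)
Your overall strategy is sound and, in fact, more explicit than the paper itself, which omits the proof of Theorem~\ref{thm:OW} and merely points to the proofs in~\cite{Duong19} and in Section~\ref{sec:construction-ideal}. Your route does differ from the paper's intended argument in one structural respect: the (legacy) proof for this integer-lattice scheme uses the adaptive ABB partitioning, setting $A_i \gets AR_i^* - h_iB$ with random scalars $h_i$ and adding an artificial abort conditioned on $1+\sum_i b_ih_i$, which costs a multiplicative $1/(2q)$ in the reduction. You instead puncture the public key selectively at the challenge tag $\bf{b}^*$, which is legitimate here because in the $\OWCCA$ game the challenger chooses the message itself and can request its LWE samples at setup, so $\bf{c}_1^*,\bf{c}_2^*,D^*$ and hence $\bf{b}^*$ are computable before the adversary sees anything; this mirrors the ideal-lattice proof of Theorem~\ref{thm:OWCCA Ideal} (where $\bf{a}$ is generated by $\TrapGen(q,\sigma,\bf{a}',-h^*_\theta)$ at setup) and yields a tighter bound with no $1/(2q)$ loss. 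Your three-way case analysis of decryption queries (forgery of the one-time signature, collision of $H'$, or a tag $\bf{b}\ne\bf{b}^*$ that can be answered) is exactly the CHK-style argument the paper invokes.

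There are two points where your write-up, as literally stated, would fail. First, you generate ``a fresh $B_0\in\ZZ_q^{n\times m}$'' with no trapdoor and say the simulator answers case-(iii) queries ``with $T_A$ (equivalently $\SampleRight$ via $B_0$).'' In the Game~1$\to$Game~2 hop the matrix $A$ is assembled from LWE samples and has \emph{no} trapdoor, so $\SampleLeft$ with $T_A$ is unavailable; the reduction can only answer decryption queries via $\SampleRight(A,hB_0,R,T_{B_0},U,\sigma)$, which requires $B_0$ to be produced by $\TrapGen$ together with $T_{B_0}$ (and requires the scalar coefficient $h$ of $B_0$ to be nonzero mod $q$ whenever $\bf{b}\ne\bf{b}^*$, which holds since the coefficients lie in $\{-2l,\dots,-2\}$ and $q>2l$). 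You need an explicit intermediate game making this switch so that the final reduction is well defined. Second, your claim that the adversary's advantage in Game~2 is zero is not accurate for a Type-I adversary: $(\bf{c}_2^*,\bf{c}_4^*)$ still honestly encrypt $H(\bf{m})$, and the adversary holds the trapdoor $T_{A'}$, so it recovers $H(\bf{m})$ even in the final game. The residual advantage must be bounded by $\Adv^{\mathsf{OW}}_H$ (plus $1/|\cal{M}|$); your final inequality does include this term, so the conclusion stands, but the intermediate statement should be corrected (the paper's legacy proof handles this cleanly by a separate game that replaces $H(\bf{m})$ with $H(\bf{m}')$ for an independent $\bf{m}'$).
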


\begin{thm}\label{thm:IND}
	The PKEET with parameters $(q,n,m,\sigma,\alpha)$ as in~\eqref{eq:params}  is $\INDCCA$ secure provided that $H'$ is a collision-resistant hash function.
\end{thm}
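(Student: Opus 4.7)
The proof will follow a sequence of games, mirroring the proof of Theorem~\ref{thm:OWCCA Ideal} and the analysis of~\cite{Duong19}, combined with the selective-ID partitioning of~\cite{ABB10-EuroCrypt} and the CHK-style treatment of the embedded one-time signature.

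\textbf{Game 0} is the real $\INDCCA$ game against a Type-II adversary $\cal{A}$ targeting user $\theta$ (guessed with probability $1/N$). In \textbf{Game 1} the challenger pre-samples $K^* \in \ZZ_q^{m \times k}$ and uniform $\bf{c}_1^*, \bf{c}_2^* \in \ZZ_q^t$ (to be reused as the corresponding challenge blocks), computes $D^* := AK^*$, fixes the target tag $\bf{b}^* := H'(\bf{c}_1^* \| \bf{c}_2^* \| D^*)$, and programs $A_1,\ldots,A_l, B$ by the ABB10-style partitioning on the $A$-side: sample low-norm $R_0,\ldots,R_l \in \{-1,1\}^{m \times m}$ and scalars $h_0,\ldots,h_l \in \ZZ_q$ such that $h_0+\sum_i b_i^* h_i = 0$ while $h_0 + \sum_i b_i h_i$ is invertible for every query tag $\bf{b}\ne\bf{b}^*$, and set $A_i := AR_i + h_i G$, $B := AR_0 + h_0 G$ for a fresh $\TrapGen$-generated gadget $G$ with trapdoor $T_G$. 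By the leftover-hash lemma this public key is statistically close to the real one; $A'$ is still from $\TrapGen$ with known $T_{A'}$. Decryption queries with $\bf{b}\ne\bf{b}^*$ are answered by $\SampleRight$ with $T_G$ on the $A$-side and $\SampleLeft$ with $T_{A'}$ on the $A'$-side. Queries with $\bf{b}=\bf{b}^*$ force, by collision-resistance of $H'$, $(\bf{c}_1,\bf{c}_2,D)=(\bf{c}_1^*,\bf{c}_2^*,D^*)$; then the verification equation $A\bf{u}=D^*\bf{d}$ with either $\bf{d}\ne\bf{d}^*$ or $\bf{u}\ne\bf{u}^*$ yields a (strong) forgery against the Lyubashevsky--Micciancio one-time signature (Theorem~\ref{sig strong SIS}), ruled out by the $\mathrm{SIS}^\infty_{n,q,\beta,m}$ assumption implied by~\eqref{eq:params}.

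\textbf{Game 2} plants a first LWE instance in $(A,U)$: the oracle's outputs $\bf{v}_U,\bf{v}_A$ determine $\bf{c}_1^*:=\bf{v}_U + \bf{m}_b\lfloor q/2\rfloor$ and $\bf{c}_3^*:=[\bf{v}_A^\top\mid\bar{R}^\top\bf{v}_A]^\top$ with $\bar{R}:=R_0+\sum_i b_i^* R_i$, following the ABB10 embedding. Pseudorandom samples reproduce Game 1; uniform samples make $\bf{c}_1^*$ uniform (hiding $\bf{m}_b$) and $\bf{c}_3^*$ a deterministic function of a uniform vector. \textbf{Game 3} re-programs the setup to the symmetric partitioning on the $A'$-side (fresh low-norm $R'_i$ and a fresh gadget $G'$ satisfying the same condition at $\bf{b}^*$), so that $A'$-side decryptions are now handled by $\SampleRight$ with $T_{G'}$; the re-sampled public key is again statistically close to that of Game 2 by the leftover-hash lemma, and the already-uniform blocks $\bf{c}_1^*,\bf{c}_3^*$ are replayed. \textbf{Game 4} plants a second LWE instance in $(A',U)$ and replaces $(\bf{c}_2^*,\bf{c}_4^*)$ by the analogous uniform tuple. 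In the final game the entire challenge ciphertext interacts with $\bf{m}_b$ only through additive shifts in $\bf{c}_1^*,\bf{c}_2^*$; absorbing these shifts into the uniform $\bf{v}$'s shows the adversary's view is identically distributed for $b=0$ and $b=1$, so the advantage is exactly zero.

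The main obstacle is maintaining decryption capability on both sides across the two LWE hops, since each hop consumes the $\TrapGen$ trapdoor on the corresponding side and the shared matrices $A_1,\ldots,A_l,B$ cannot simultaneously carry gadget trapdoors aligned with both $A$ and $A'$. I would handle this by doing the two hops sequentially and re-programming the partitioning for the active side between Games 2 and 3 (using the leftover-hash lemma to argue that the public keys in the two games are statistically close), while carrying the Game-1 abort condition on $\bf{b}=\bf{b}^*$ across all subsequent games. Putting the hops together yields
$$\Adv^{\IND}_{\cal{A},\mathrm{PKEET}} \;\leq\; N\bigl(2\,\Adv^{\LWE} + \Adv_{H',\mathsf{CR}} + \Adv^{\mathrm{forge}}_{\mathrm{OTS}}\bigr),$$
which is negligible under the parameter choices~\eqref{eq:params}, the collision-resistance of $H'$, and the strong unforgeability of the one-time signature from Theorem~\ref{sig strong SIS}.
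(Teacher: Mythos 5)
Your overall architecture---ABB10-style partitioning of $A_1,\dots,A_l,B$, handling same-tag decryption queries via a forgery against the one-time signature or a collision in $H'$, and two sequential LWE hops (one for the $(A,\mathbf{s}_1)$ side and one for the $(A',\mathbf{s}_2)$ side, with the partitioning re-programmed in between)---is the right shape, and the two-hop structure is in fact necessary, since $\mathbf{c}_2^*$ carries $H(\mathbf{m}_b)$ and must also be randomized before the advantage can be declared zero. However, there is a genuine gap in your Game 1: you pre-sample uniform $\mathbf{c}_1^*,\mathbf{c}_2^*$ at setup and fix the target tag $\mathbf{b}^*=H'(\mathbf{c}_1^*\|\mathbf{c}_2^*\|D^*)$, choosing the $h_i$ so that the partitioning vanishes exactly at $\mathbf{b}^*$. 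But in this scheme the tag hashes the message-dependent blocks: in every game before the last one, the actual challenge block is $\mathbf{c}_1^*=U^T\mathbf{s}_1+\mathbf{x}_1+\mathbf{m}_b\lfloor q/2\rfloor$, which is neither uniform nor known at setup, because $\mathbf{m}_b$ is chosen by the adversary only in the challenge phase. You therefore cannot commit to $\mathbf{b}^*$ at key-generation time: if you emit the real challenge blocks, the real tag differs from the programmed $\mathbf{b}^*$, the exponent $h_{\mathbf{b}}$ at the real tag is nonzero, and the embedding of $\mathbf{c}_3^*$ as $[\mathbf{v}_A^\top\mid \bar{R}^\top\mathbf{v}_A]^\top$ fails (the second block would need a $G^\top\mathbf{s}_1$ term the reduction cannot compute); if instead you emit the pre-sampled uniform blocks, then Game 0 and Game 1 already differ by exactly the change the LWE hops are supposed to justify, which is circular. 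This is precisely why the CHK transform keys the tag only to the verification key ($D$ here), which \emph{is} pre-sampleable, whereas this scheme sets $\mathbf{b}=H'(\mathbf{c}_1\|\mathbf{c}_2\|D)$.

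The paper's argument avoids this by using the adaptive (Waters/ABB10 full-security) partitioning rather than the selective one: the scalars $h_i$ are chosen uniformly at random, the reduction succeeds only when the adaptively determined challenge tag happens to satisfy $1+\sum_i b_i^* h_i=0$, and an artificial abort equalizes the abort probability over adversarial strategies. This costs a multiplicative factor of order $1/(4q)$ in the reduction---which is why the constraint $q>2Q$ appears in~\eqref{eq:params}---a loss that is absent from your claimed bound $N(2\,\Adv^{\LWE}+\Adv_{H',\mathsf{CR}}+\Adv^{\mathrm{forge}}_{\mathrm{OTS}})$. Relatedly, you also need an explicit bad event for decryption queries whose tag $\mathbf{b}\ne\mathbf{b}^*$ accidentally satisfies $h_{\mathbf{b}}=0$ (probability $1/q$ per query); asserting that the $h_i$ can be chosen so that every future query tag is invertible is not something the reduction can guarantee in advance. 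If you either adopt the adaptive partitioning with the abort analysis, or first modify the scheme so the tag depends only on $D$, the remainder of your argument goes through.
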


\section{Conclusion}
In this paper, we propose a direct construction of an efficient PKEET scheme over ideal lattices. We also propose a modification of the PKEET construction over integer lattices from~\cite{Duong19}. Our two schemes are proven to be CCA2-secure in the standard model. We also provide an instantiation from the generic construction by Lee et al.~\cite{Lee2016} from which we conclude that our schemes are much more efficient than that generic construction. It is an interesting question of whether one can further improve the efficiency of PKEET constructions while still mantaining the CCA2 security.



\section*{Appendix A: An instantiation of Lee et al.'s construction}
In this section, we will present a lattice-based PKEET which is an instantiation of the Lee et al.'s construction~\cite{Lee2016}. In their generic construction, they need (i) a multi-bit HIBE scheme and (ii) an one-time signature scheme. To instantiate their construction, we modify the lattice based single-bit HIBE of \cite{ABB10-EuroCrypt} to multi-bit one and use it, along with the signature scheme, to have following construction of lattice based PKEET.  Even though one needs only a one-time signature scheme, we choose the full secure signature scheme from~\cite{ABB10-EuroCrypt} to unify the system, since in such case, both signature and HIBE schemes use the same public key.
It is required to use multi-bit HIBE and signature scheme to have PKEET from Lee et al.'s \cite{Lee2016}.

In what follows, we will denote by $[id_1.id_2]$ the identity of a $2$-level HIBE scheme where $id_1$ is the first level identity and $id_2$ is the second level identity. Below, we follow~\cite{Lee2016} to denote by $[\ID.0]$ (resp.  $[\ID.1]$) an identity in the second level in which we indicate that $\ID$ is the identity of the first level.
\subsection{Construction}\label{sec:construct Lee}
\begin{description}
	\item[$\Setup$($\lambda$)]$\\$ On input security parameter $\lambda$, and a maximum hierarchy depth $2$, set the parameters $q, n, m, {\bar \sigma}, {\bar \alpha}$. The vector ${\bar \sigma} ~\&~ {\bar \alpha} \in \mathbb{R}^2$ and we use $\sigma_l$ and $\alpha_l$ to refer to their $l$- th coordinate. 
	\begin{enumerate}
		\item Use algorithm $\TrapGen(q, n)$ to select a uniformly random $n \times m$- matrix $A \in \mathbb{Z}_q^{n \times m}$ with a basis $T_{A}$ for $\L^{\perp}_q (A)$ and $\L^{\perp}_q (A')$, respectively. Repeat this Step until $A$ has rank $n$.
		\item Select uniformly random $m \times m$ matrices $A_1,A_2,B \in \mathbb{Z}_q^{n \times m}$.
		\item Select a uniformly random matrix $U\in\mathbb{Z}_q^{n\times t}$.
		\item We need some hash functions $H: \{0, 1\}^* \rightarrow\{0,1\}^t$,  $H_1: \{0, 1\}^* \rightarrow \{-1, 1\}^t$, $H_2 :\{0,1\}^*\to\ZZ^n_q$ and a full domain difference map $H' :\ZZ_q^n\to\ZZ_q^{n\times n}$ as in~{\cite[Section 5]{ABB10-EuroCrypt}}.
		\item Output the public key and the secret key
		$$\PK=(A,A_1,A_2 ,B,U)\quad,\quad\SK=T_A .$$
	\end{enumerate}

	\item[$\Enc(\PP, \ID, \bf{m})$] $\\$
	On input the public key $\PK$ and a message $\bf{m}\in\{0,1\}^t$ do
	\begin{enumerate}
		\item Choose uniformly random $\bf{s}_1,\bf{s}_2\in\ZZ_q^n$.
		\item Choose $\bf{x}_1,\bf{x}_2\in\overline{\Psi}_\alpha^t$ and compute
		\begin{align*}
		\bf{c}_1 &= U^T\bf{s}_1 +\bf{x}_1 +\bf{m}\big\lfloor\frac{q}{2}\big\rfloor\in\ZZ_q^t,\\
		\bf{c}_2 &= U^T\bf{s}_2 +\bf{x}_2 +H(\bf{m})\big\lfloor\frac{q}{2}\big\rfloor \in\ZZ_q^t.
		\end{align*}
	\item Choose $K\in\ZZ_q^{m\times k}$ uniformly at random and compute $D=AK\in\ZZ_q^{n\times k}$.
		\item Set 	$\ID := H_2(D)\in\ZZ_q^n$.
		\item Build the following matrices in $\ZZ_q^{n\times 3m}$:
		\begin{align*}
		F_{\ID.0} &= (A | A_1 + H'(0)\cdot B | A_2 + H'(\ID)\cdot B),\\
		F_{\ID.1} &= (A | A_1 + H'(1)\cdot B | A_2 + H'(\ID)\cdot B).
		\end{align*}
		
		\item Choose a uniformly random $n\times 2m$ matrix $R$ in $\{-1,1\}^{n\times 2m}$.
		\item Choose $\bf{y}_1, \bf{y}_2\in\overline{\Psi}_\alpha^m$ and set $\bf{z}_1=R^T\bf{y}_1, \bf{z}_2=R^T\bf{y}_2\in\ZZ_q^{2m}$.
		\item Compute
		\begin{align*}
		\bf{c}_3&=F_{\ID.0}^T\bf{s}_1+[\bf{y}_1^T|\bf{z}_1^T]^T\in\ZZ_q^{3m},\\ 
		\bf{c}_4&=F_{\ID.1}^T\bf{s}_2+[\bf{y}_2^T|\bf{z}_2^T]^T\in\ZZ_q^{3m}.
		\end{align*}
		
		\item Let $\bf{b}:= H_1(\bf{c}_1\|\bf{c}_2\|\bf{c}_3\|\bf{c}_4)\in\{0,1\}^k$ such that $\|\bf{b}\|_1\leq w$.
		\item Compute $\bf{u}=K\bf{b}\in\ZZ_q^m$.
		\item Output the ciphertext 
		$$\CT=(\bf{c}_1,\bf{c}_2,\bf{c}_3,\bf{c}_4,\bf{u},D)\in\ZZ_q^{2t+7m}\times\ZZ_q^{n\times k} .$$
	\end{enumerate}

	\item[$\Dec (\PP, \SK,\CT)$]$\\$ On input a secret key $\SK_{\ID}$ and a ciphertext $\CT$, do 
	\begin{enumerate}
		\item Parse the ciphertext $\CT$ into $(\bf{c}_1,\bf{c}_2,\bf{c}_3,\bf{c}_4,\bf{u},D)$ and compute $\bf{d}= H'(\bf{c}_1\|\bf{c}_2\|\bf{c}_3\|\bf{c}_4)\in\{0,1\}^l$ and check whether $\|\bf{d}\|_1\leq w$. If not then return $\perp$, otherwise continue to Step 2.
		\item Check whether $A\bf{u}=D\bf{d}\mod q$. If not then returns $\perp$;  Otherwise, continue to Step 3.
		\item Let $\bf{b}:= H_1(\bf{c}_1\|\bf{c}_2\|\bf{c}_3\|\bf{c}_4)\in\{-1,1\}^l$ and define a matrix
		$$F=(D| B+\sum_{i=1}^lb_iA_i)\in\ZZ_q^{n\times 2m}.$$
		\item If $F\cdot\bf{u}=0$ in $\ZZ_q$ and $\|\bf{e}\|\leq\sigma\sqrt{2m}$ then continue to Step 5; otherwise output $\perp$.
		\item Set $\ID:= H_2(D)\in\ZZ_q^n$ and build the following matrices:
		\begin{align*}
		F_{\ID.0} &= (A | A_1 + H'(0)\cdot B| A_2 + H'(\ID)\cdot B) \in \ZZ_q^{n\times 3m},\\
		F_{\ID.1} &= (A | A_1+ H'(1)\cdot B| A_2 + H'(\ID)\cdot B) \in \ZZ_q^{n\times 3m}.
		\end{align*}
		\item Generate
		\begin{align*}
		E_{0} \gets \SampleBasisLeft(A, A_1 + H'(0)\cdot B, T_A, \sigma)\\
		E_{1} \gets \SampleBasisLeft(A, A_1 + H'(0)\cdot B, T_A, \sigma)\\ 
		E_{\ID.0} \gets \SampleLeft(A|A_1 + H'(0)\cdot B, A_2 + H'(\ID)\cdot B, E_{0}, U, \sigma)\\ ~s.t.~ F_{\ID.0} \cdot E_{\ID.0} = U\mod q\\
		E_{\ID.1} \gets \SampleLeft(A|A_1 + H'(1)\cdot B, A_2 + H'(\ID)\cdot B, E_{1}, U, \sigma)\\ ~s.t.~ F_{\ID.1} \cdot E_{\ID.1} = U\mod q.
		\end{align*}
		\item Compute $\bf{w}\gets\bf{c}_1-E_{\ID.0}^T\bf{c}_3\in\ZZ_q^t$.
		\item For each $i=1,\cdots, t$, compare $w_i$ and $\lfloor\frac{q}{2}\rfloor$. If they are close, output $m_i=1$ and otherwise output $m_i=0$. We then obtain the message $\bf{m}$.
		\item Compute $\bf{w}'\gets\bf{c}_2-E_{\ID.1}^T\bf{c}_4\in\ZZ_q^t$.
		\item For each $i=1,\cdots,t$, compare $w'_i$ and $\lfloor\frac{q}{2}\rfloor$. If they are close, output $h_i=1$ and otherwise output $h_i=0$. We then obtain the vector $\bf{h}$.
		\item If $\bf{h}=H(\bf{m})$  then output $\bf{m}$, otherwise output $\perp$.
	\end{enumerate}
	
	\item[$\Td(\SK_i)$]$\\$
	On input the secret key $\SK_i (= T_{A_i})$ of a user $U_i$, then generates $\td_i:=E_{i,1}$ as in Step 6 in Decryption process, i.e.,
	$$\td_i\gets \SampleBasisLeft(A_i, A_{i,1} + H'(0)\cdot B_i, T_{A_i}, \sigma)$$

	\item[$\Test(\td_i,\td_j,\CT_i,\CT_j)$]$\\$ On input trapdoors $\td_i,\td_j$ and ciphertexts $\CT_i,\CT_j$ of users $U_i$ and $U_j$ respectively, for $k=i,j$, do the following
	\begin{enumerate}
		\item Parse $\CT_k$ into 
		$$(\bf{c}_{k,1},\bf{c}_{k,2},\bf{c}_{k,3},\bf{c}_{k,4},vk_k,\bf{u}_k,D_k).$$
		\item 
		Sample $E_{\ID_k.1}\in\ZZ_q^{3m\times t}$ from
		$$ E_{\ID_k.1} \gets \SampleLeft(A_k|A_{k,1} + H'(1)\cdot B_k, A_{k,2} + H'(\ID)\cdot B_k, \td_k, U, \sigma)$$
		\item Use $E_{\ID_k.1}$ to decrypt $\bf{c}_{k,2}$, $\bf{c}_{k,4}$ as in Steps 9-10 of $\Dec(\SK,\CT)$ above to obtain the hash value $\bf{h}_k$.
		\item If $\bf{h}_i=\bf{h}_j$ then ouput $1$; otherwise output $0$.
	\end{enumerate}
\end{description}

\begin{thm}[Correctness]
	The above PKEET is correct if the hash function $H$ is collision resistant.
\end{thm}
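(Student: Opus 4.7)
The plan is to decompose the correctness statement into the three conditions required of a PKEET scheme in Definition~\ref{def:PKEET}: (1) decryption correctness on honestly generated ciphertexts, (2) test completeness when the underlying plaintexts agree, and (3) test soundness when the underlying plaintexts differ. Only (3) actually invokes the collision-resistance hypothesis on $H$; (1) and (2) will follow from the arithmetic of the scheme together with an error-bound analysis analogous to the one used for the HIBE of Agrawal--Boneh--Boyen~\cite{ABB10-EuroCrypt}.

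For (1), I would first check that the two consistency tests in the decryption algorithm pass with probability $1$ on an honest ciphertext. The norm bound $\|\bf{d}\|_1\le w$ is guaranteed by construction in step 8 of $\Enc$, and $A\bf{u}=D\bf{d}\bmod q$ follows immediately from $\bf{u}=K\bf{d}$ and $D=AK$. I would then expand $\bf{w}=\bf{c}_1-E_{\ID.0}^T\bf{c}_3$; using $F_{\ID.0}E_{\ID.0}=U\bmod q$ (which is what $\SampleLeft$ guarantees), the $U^T\bf{s}_1$ term cancels, leaving the error expression
\[
\bf{w}=\bf{x}_1-E_{\ID.0}^T\!\begin{pmatrix}\bf{y}_1\\ \bf{z}_1\end{pmatrix}+\bf{m}\lfloor q/2\rfloor.
\]
The remaining task is to bound each coordinate of the noise below $\lfloor q/4\rfloor$ so the rounding step recovers $\bf{m}$ exactly. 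The components $\bf{y}_1,\bf{x}_1\leftarrow\bar\Psi_\alpha$ and $\bf{z}_1=R^T\bf{y}_1$ are controlled by $\alpha$ and $\|R\|\le\sqrt{m}$ (Theorem~\ref{thm:Gauss}), while $E_{\ID.0}$ is discrete-Gaussian with parameter $\sigma$ by the output guarantee of $\SampleLeft$; with the parameters as in~\eqref{eq:params} (with the obvious adjustment that the second-level sampling widens the error by the usual multiplicative factor) a standard Gaussian tail bound gives the required inequality with overwhelming probability. An identical computation for $\bf{w}'=\bf{c}_2-E_{\ID.1}^T\bf{c}_4$ recovers $H(\bf m)$, and the final equality $\bf h=H(\bf m)$ then holds by construction.

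For (2), it suffices to note that the hash value $\bf h_k$ extracted by $\Test$ from the pair $(\bf c_{k,2},\bf c_{k,4})$ is, by the same error-bound calculation as in (1), exactly $H(\bf m_k)$. Thus if $\bf m_i=\bf m_j$ then $\bf h_i=H(\bf m_i)=H(\bf m_j)=\bf h_j$ and $\Test$ outputs $1$ with overwhelming probability. For (3), if $\bf m_i\ne\bf m_j$ but $\Test$ outputs $1$, then the pair $(\bf m_i,\bf m_j)$ constitutes a collision for $H$; since the ciphertexts (and therefore the recovered messages) are sampled independently of $H$, the collision-resistance assumption on $H$ bounds this event by a negligible quantity.

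The only step that requires real care is the error-growth bound in (1): the $2$-level HIBE makes $E_{\ID.b}$ a Gaussian of parameter $\sigma$ over a lattice of dimension $3m$, so $\|E_{\ID.b}^T[\bf y^T|\bf z^T]^T\|_\infty$ is bounded by roughly $\sigma\cdot(\alpha q\sqrt m+\alpha q m^{3/2})\cdot\omega(\sqrt{\log n})$, and one must verify that the parameters in~\eqref{eq:params} keep this below $q/5$. Everything else is either a direct cancellation in $\ZZ_q$ (the $U^T\bf s$ terms), an immediate consequence of the design of the signature component (the $A\bf u=D\bf d$ check), or a one-line application of collision-resistance (the test-soundness claim).
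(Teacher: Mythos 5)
Your proof is correct in substance, but it takes a genuinely different route from the paper's. The paper disposes of this theorem in two sentences: it observes that the multi-bit HIBE and the signature component are taken from Agrawal--Boneh--Boyen, whose correctness is established in~\cite{ABB10-EuroCrypt}, and then invokes the generic correctness theorem of Lee et al.\ ({\cite[Theorem 1]{Lee2016}}), which says that the HIBE-plus-one-time-signature composition yields a correct PKEET whenever the building blocks are correct and $H$ is collision resistant. You instead verify the three conditions of Definition~\ref{def:PKEET} directly: the cancellation of the $U^T\bf{s}$ terms via $F_{\ID.0}E_{\ID.0}=U$, the signature consistency check $A\bf{u}=D\bf{d}$ from $\bf{u}=K\bf{d}$ and $D=AK$, the Gaussian error bound for the rounding step, and the reduction of test soundness to a collision for $H$. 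Your direct computation is more informative --- it makes explicit that collision resistance is used only for condition (3) of correctness, and it forces one to actually track the error growth through the level-2 sampling, which the paper leaves entirely to the citations --- at the cost of length; the paper's modular argument buys brevity but silently assumes the instantiation matches the generic template exactly. Two small slips worth fixing: the parameter set governing this appendix construction is~\eqref{eq:params-HIBE}, not~\eqref{eq:params}, and the relevant decryption-error threshold stated there is $q/5$ rather than $\lfloor q/4\rfloor$; neither affects the structure of your argument.
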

\begin{proof}
	Since we employ the multi-bit HIBE and signature scheme from~\cite{ABB10-EuroCrypt}, their correctness follow from~\cite{ABB10-EuroCrypt}. The Theorem follows from~{\cite[Theorem 1]{Lee2016}}.
\end{proof}

\subsection{Parameters}
We follow~{\cite[Section 8.3]{ABB10-EuroCrypt}} for choosing parameters for our scheme. Now for the system to work correctly we need to ensure
\begin{itemize}
	\item the error term in decryption is less than ${q}/{5}$ with high probability, i.e., $q=\Omega(\sigma m^{3/2})$ and $\alpha<[\sigma lm\omega(\sqrt{\log m})]^{-1}$,
	\item that the $\TrapGen$ can operate, i.e., $m>6n\log q$,
	\item that $\sigma$ is large enough for $\SampleLeft$ and $\SampleRight$, i.e.,
	$\sigma>lm\omega(\sqrt{\log m})$,
	\item that Regev's reduction applies, i.e., $q>2\sqrt{n}/\alpha$,
\end{itemize}
Hence the following choice of parameters $(q,m,\sigma,\alpha)$ from \cite{ABB10-EuroCrypt} satisfies all of the above conditions, taking $n$ to be the security parameter:
\begin{equation}\label{eq:params-HIBE}
\begin{aligned}
& m=6n^{1+\delta}\quad,\quad q=\max(2Q,m^{2.5}\omega(\sqrt{\log n})) \\
& \sigma = ml\omega(\sqrt{\log n})\quad,\quad\alpha=[l^2m^2\omega(\sqrt{\log n})]
\end{aligned}
\end{equation}
and round up $m$ to the nearest larger integer and $q$ to the nearest larger prime. Here we assume that $\delta$ is such that $n^\delta>\lceil\log q\rceil=O(\log n)$.

\begin{thm}
	The PKEET constructed in Section~\ref{sec:construct Lee} with paramaters as in~\eqref{eq:params-HIBE} is $\textsf{IND-ID-CCA2}$ secure provided that $H_1$ is collision resistant. 
\end{thm}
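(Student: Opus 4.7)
The theorem follows the generic paradigm of Lee et al.~\cite{Lee2016}, which instantiates the CHK transformation~\cite{CHK04}: a 2-level IND-sID-CPA HIBE composed with a strongly unforgeable one-time signature yields an IND-CCA2 PKEET. The plan is a sequence-of-games reduction mirroring this template, while simultaneously verifying that each primitive we plug in meets the required hypothesis. Two preliminary observations: (i) the multi-bit 2-level HIBE obtained from~\cite{ABB10-EuroCrypt} by replacing the single target vector $\bf{u}$ with an $n\times t$ matrix $U$ is IND-sID-CPA under LWE, via the standard column-wise extension where each column induces an independent LWE instance; (ii) the signature mechanism realized here via $D = AK$ and $\bf{u} = K\bf{b}$ with verification $A\bf{u} = D\bf{b}$, $\|\bf{u}\|$ small, is strongly unforgeable under SIS by~\cite{ABB10-EuroCrypt}, with $H_1$ collision resistant so that the signed payload $\bf{b}$ uniquely binds the ciphertext body $(\bf{c}_1,\bf{c}_2,\bf{c}_3,\bf{c}_4)$.

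\textbf{Game sequence.} Game~0 is the real IND-CCA2 experiment. In Game~1 the challenger pre-samples a random $D^*$ (and thus a target identity $\ID^* = H_2(D^*)$) at setup, and aborts if the actual challenge ciphertext produces a different one; this costs only a polynomial factor under the standard CHK-style guessing. In Game~2 we abort whenever a decryption query contains $D = D^*$ and a ciphertext body differing from the challenge — any such query yields a strong-unforgery against the signature. In Game~3 we abort whenever $D \ne D^*$ but $H_2(D) = \ID^*$, bounded by collision-resistance / full-rank-difference injectivity of the identity encoding. In Game~4 we regenerate the public matrices $A_1, A_2, B$ using the ABB10 embedding: choose $A, B$ uniform, pick low-norm $R_1, R_2$ and $h$-tags so that $A_1 = A R_1 - H'(\ID^*)B$ and $A_2 = A R_2 - H'(1)B$ (say) — the distributions are statistically close by the leftover hash lemma and the simulator can now generate HIBE keys for every level-$2$ identity except $[\ID^*.0]$ and $[\ID^*.1]$ via $\SampleRight$ with the trapdoor of~$B$. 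Finally in Game~5 we replace the two branches of the challenge ciphertext with uniform values by a two-step reduction to LWE (one reduction for the $\bf{c}_1,\bf{c}_3$ branch which encodes $\bf{m}_b$, and one for $\bf{c}_2,\bf{c}_4$ which encodes $H(\bf{m}_b)$), exactly as in~\cite[Theorem~25]{ABB10-EuroCrypt}; at this point the adversary's advantage is zero.

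\textbf{Main obstacle.} The principal difficulty lies in Games~2 and~4 taken together: the simulator must answer every $\cal{O}^{\Dec}$ query without holding $T_A$. The CHK/ABB10 recipe resolves this because the trapdoor of $B$ enables key delegation (through $\SampleBasisLeft$/$\SampleRight$) for every first-level identity $\ne \ID^*$, so the only troublesome queries are those whose level-$1$ identity equals $\ID^*$. Any such query, by Game~3, must reuse $D = D^*$, and by the CCA2 rule its ciphertext body must differ from the challenge; hence the attached signature $\bf{u}$ is a fresh valid signature on a new message $\bf{b} = H_1(\bf{c}_1\|\bf{c}_2\|\bf{c}_3\|\bf{c}_4)$, breaking the strong unforgeability of the underlying one-time signature — a contradiction. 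Collecting the bounds from each transition and combining with the LWE-based advantage against the 2-level HIBE gives the IND-CCA2 bound, so the advantage of any PPT $\cal{A}$ is
\[
\Adv^{\INDCCA}_{\cal{A}} \le O(q)\bigl(\Adv^{\textsf{LWE}} + \Adv_{H_1,\mathsf{CR}} + \Adv^{\text{SIS}}_{\mathsf{sig}} + \Adv_{H_2,\mathsf{CR}}\bigr),
\]
which is negligible under the stated hardness assumptions.
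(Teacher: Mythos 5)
Your proof is correct and follows essentially the same route as the paper: the paper's own proof is a two-line citation (IND-sID-CPA security of the two-level HIBE via~{\cite[Theorem 33]{ABB10-EuroCrypt}}, strong unforgeability of the one-time signature via Theorem~\ref{sig strong SIS}, and the generic composition theorem {\cite[Theorem 5]{Lee2016}}), and your game sequence is precisely that composition theorem unfolded into its CHK-style proof. One small caution: no guessing of $D^*$ (and hence no loss factor) is needed in your Game~1 --- since $D^*$ ranges over an exponential-size set, a literal guess would be fatal; the standard CHK argument instead has the challenger generate the one-time key pair $(K^*,D^*)$ itself at setup and commit to $\ID^*=H_2(D^*)$ as the selective-ID target, which is what makes that step lossless.
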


\begin{proof}
	The HIBE is $\textsf{IND-sID-CPA}$ secure by~{\cite[Theorem 33]{ABB10-EuroCrypt}} and the signature is strongly unforgeable by Theorem~\ref{sig strong SIS}. The result follows from {\cite[Theorem 5]{Lee2016}}.
\end{proof}

\begin{thm}[{\cite[Theorem 3]{Lee2016}}]
	The PKEET with parameters $(q,n,m,\sigma,\alpha)$ as in~\eqref{eq:params-HIBE} is $\textsf{OW-ID-CCA2}$ provided that $H$ is one-way and  $H_1$ is collision resistant.
\end{thm}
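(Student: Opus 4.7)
The approach is to derive the statement from the generic construction theorem~\cite[Theorem 3]{Lee2016}, which already establishes $\textsf{OW-ID-CCA2}$ security for any PKEET assembled from (i) a 2-level $\textsf{IND-sID-CPA}$-secure multi-bit HIBE, (ii) a strongly unforgeable one-time signature, and (iii) a one-way hash $H$ masking the tag $H(\bf{m})$, glued together through the CHK transformation~\cite{CHK04}. Our instantiation is exactly of this shape: the multi-bit HIBE is the straightforward $t$-message extension of the 2-level Agrawal-Boneh-Boyen scheme~\cite{ABB10-EuroCrypt}, and the one-time signature is the SIS-based scheme of Lyubashevsky-Micciancio~\cite{LM18}, embedded in the ciphertext by the pair $(D,\bf{u})=(AK,K\bf{d})$ with $\bf{d}=H_1(\bf{c}_1\|\bf{c}_2\|\bf{c}_3\|\bf{c}_4)$. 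Here $H_1$ hashes the ciphertext components into the message to be signed, so its collision resistance is precisely what binds the signature to the entire ciphertext.

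The proof then reduces to verifying three facts that feed~\cite[Theorem 3]{Lee2016}. First, the multi-bit HIBE is $\textsf{IND-sID-CPA}$: this extends~{\cite[Theorem 33]{ABB10-EuroCrypt}} coordinate-wise, replacing the single syndrome vector by a syndrome matrix $U\in\ZZ_q^{n\times t}$, so the reduction goes through under the same $(\ZZ_q,n,\overline{\Psi}_\alpha)$-LWE assumption with the parameter choice~\eqref{eq:params-HIBE}. Second, the one-time signature is strongly unforgeable: this is Theorem~\ref{sig strong SIS}, since the parameter choice in~\eqref{eq:params-HIBE} satisfies $q\geq 2w\sqrt{m}n^\epsilon$ and the resulting scheme is strongly unforgeable under worst-case $\text{SIVP}_\gamma$ hardness. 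Third, $H$ is one-way by hypothesis. Plugging these into the generic reduction completes the argument, with collision-resistance of $H_1$ precluding the ``same-identity-different-ciphertext'' attack path.

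The step requiring care is the handling of the Type-I adversary's decryption queries against the target user $\theta$. The adversary holds $\td_\theta$, an HIBE trapdoor for $[\ID_\theta.1]$, which suffices for the $H(\bf{m})$-branch but not for the $\bf{m}$-branch $[\ID_\theta.0]$; the reduction must therefore answer $\cal{O}^\Dec(\theta,\cdot)$ queries using a simulated HIBE trapdoor for the $[\ID_\theta.0]$-branch at all identities except the one embedded in the challenge. Following the standard CHK analysis, decryption queries on a ciphertext with $D=D^*$ but $(\bf{c}_1,\ldots,\bf{c}_4)\neq(\bf{c}_1^*,\ldots,\bf{c}_4^*)$ yield either an $H_1$-collision or a forgery against the one-time signature on a fresh message $\bf{d}\neq\bf{d}^*$; queries with $D\neq D^*$ give a fresh identity $\ID=H_2(D)\neq\ID^*$ with overwhelming probability (by the high min-entropy of $D^*=AK^*$ together with the properties of $H_2$), for which the simulator can produce a decryption trapdoor via $\SampleBasisLeft$ using $T_A$. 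Once these cases are dispatched, the residual winning probability of the adversary is bounded by the one-way advantage against $H$ plus negligible terms, which is the conclusion of~\cite[Theorem 3]{Lee2016}. The main technical obstacle is simply bookkeeping the abort-strategy of the ABB reduction so that the $[\ID^*.0]$-branch is programmable while the $[\ID^*.1]$-branch trapdoor can still be released to the adversary; this is exactly the situation already handled by the proof of~\cite[Theorem 3]{Lee2016} and requires no novel lattice argument beyond what is provided by the HIBE and signature building blocks.
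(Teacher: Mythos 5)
Your proposal is correct and takes essentially the same route as the paper: the paper's proof simply notes that the HIBE is \textsf{IND-sID-CPA} secure by~{\cite[Theorem 33]{ABB10-EuroCrypt}}, that the one-time signature is strongly unforgeable by Theorem~\ref{sig strong SIS}, and then invokes the generic security theorem of Lee et al.~\cite{Lee2016}. Your additional exposition of the CHK-style case analysis for decryption queries is a correct unpacking of what that generic theorem's proof does internally, but it is not a different argument.
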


\begin{proof}
	The HIBE is $\textsf{IND-sID-CPA}$ secure by~{\cite[Theorem 33]{ABB10-EuroCrypt}} and the signature is strongly unforgeable by~Theorem~\ref{sig strong SIS}. The result follows from {\cite[Theorem 6]{Lee2016}}.
\end{proof}

\end{document}